

\documentclass[journal,transmag]{IEEEtran}
\usepackage[comma,numbers,square,sort&compress]{natbib}

\usepackage{amsmath}
\usepackage{amssymb}
\usepackage{float}
\usepackage{amsthm}
\usepackage{graphicx}
\usepackage{epstopdf}
\usepackage{color}
\usepackage{verbatim}
\usepackage{tikz,times}
\usepackage{algorithm}
\usepackage{algorithmic}
\pdfminorversion=4

\newtheorem{dingyi}{Definition}[section]

\newtheorem{dingli}{Theorem}[section]

\newtheorem{tuilun}{Corollary}
\newtheorem{zhushi}{Remark}[section]

\newtheorem{yinli}{Lemma}[section]

\newcommand{\dkh}[1]{\left(#1\right)}
\newcommand{\hkh}[1]{\left\{#1\right\}}
\newcommand{\fkh}[1]{\left[#1\right]}
\allowdisplaybreaks[4]
\begin{document}

	\title{Asymptotic Properties of Distributed Social Sampling Algorithm}
	
	\author{
		Qian Liu,
		Xingkang~He,
 Haitao Fang
		
		\thanks{Qian Liu and Haitao Fang  are with  LSC, Academy of Mathematics and Systems Science, Chinese Academy of Sciences, Beijing 100190, China; they are also with School of Mathematical Sciences, University of Chinese Academy of Sciences, Beijing 100049, China.}
			\thanks{Xingkang He is with the  Department of Automatic Control,  KTH Royal Institute of
			Technology, SE-100 44 Stockholm, Sweden. (xingkang@kth.se)}

	}
	
	
	\maketitle


	\begin{abstract}
Social sampling is a novel randomized message passing protocol inspired by social communication  for  opinion formation in social networks. In a typical social sampling algorithm, each agent holds a sample from the empirical distribution of social opinions at initial time, 
and it collaborates with other agents in a distributed manner to estimate the initial empirical distribution by randomly sampling a message from current distribution estimate.
In this paper, we focus on analyzing the theoretical properties of the distributed social sampling algorithm over random networks. Firstly, we provide a framework based on stochastic approximation to study the asymptotic properties of the algorithm. Then, under mild conditions, we prove that the estimates of all agents converge to a common random distribution, which is composed of the initial empirical distribution and the accumulation of quantized error.  Besides, by tuning algorithm parameters, we prove the strong consistency, namely, the distribution estimates of agents almost surely converge to the initial empirical distribution. 
Furthermore, the asymptotic normality of estimation error generated by distributed social sampling algorithm is addressed. Finally, we provide a numerical simulation to validate the theoretical results of this paper.
	\end{abstract}

	\begin{IEEEkeywords}
		Social networks; opinion formation; social sampling; stochastic approximation; random networks; asymptotic normality
	\end{IEEEkeywords}

	\section{Introduction}
	In social networks,  the study of opinion formation is to model the  fragmentation or merging of opinions among agents in a society.  A large class of real world phenomena can be well interpreted with  opinion dynamics, such as election forecasting \cite{Holley1975Ergodic}, analysis of public opinions \cite{Acemoglu2011Bayesian}, language evolution \cite{Narayanan2014Language} and so on. 
	During the past decades, with the development of network technology and increasing of social communication,
	more and more researchers are focusing on the study of opinion formation in social networks as well as new approaches to distributed learning and estimation \cite{Friedkin2016Network,Hegselmann2002Opinion,Zhang2013Opinion,Pineda2009Noisy,Boyd2006Randomized,Lou2017Reference,Frasca2015Distributed}. \\
	\indent The  results in sociology on opinion formation are mainly based on empirical studies, which 
	usually lack sufficient theory foundation.  Thus, in recent years, the requirements for mathematical modeling and theoretical analysis are increasing. 
	Generally, in opinion dynamics, the communications between agents  can be modeled by a network or a graph, whose edges represent channels through which one agent can share information with others. Naturally, the following issues are concerned: whether an equilibrium or consensus can be achieved via such social interactions? If the answer is yes, is this equilibrium influenced by  network topology,  initial opinions of agents or  communication protocol?  Otherwise, how can the opinion dynamics behave?                                              
	There are quite a few papers considering the above problems. The early work \cite{Friedkin1999Social} proposes a common paradigm called Friedkin and Johnson model, where agents are  divided into  stubborn agents and regular agents. In \cite{Ravazzi2015Ergodic},  the Friedkin and Johnson model is viewed as a randomized gossip algorithm inducing oscillations, which should be ergodic under some stable assumptions.  Besides, a tractable communication model is developed in \cite{doi:10.3982/TE1204} to study the dynamics of belief formation and information aggregation. 	The authors propose  asymptotic learning to describe that the fraction of agents can behave properly, and provide sufficient and necessary conditions to guarantee the asymptotic learning. A continuous-time opinion dynamic model with stochastic gossip process is proposed in \cite{doi:10.1287/moor.1120.0570} to investigate the generation of disagreement and fluctuation. It is shown that the society containing stubborn agents with different opinions  keeps fluctuating in an ergodic manner.\\
	\indent Most of existing results focus on modeling the dynamics of consensus, diversity or fluctuation in social networks, where opinions are represented as scaler variables. A nature extension is that how the agents can obtain a common knowledge of the global phenomenon. In what follows, we discuss the local reconstruction of the empirical distribution of initial opinions via social interactions. In fact, probability distribution of  opinions is a proper way to model the enormous opinions of agents on some certain subjects in a complex society, such as the election candidates they support or prefer.
	The aim of the agents is to estimate the discrete empirical distribution derived by the average of initial opinions through local interactions. This algorithm is essentially a randomized approximation of consensus procedure. So far, there have been a wide range of work on networked consensus in terms of communication noise, time delay, network topology and so on\cite{Degroot1974Reaching,Borkar1982Asymptotic,Tsitsiklis1983Convergence,Olfatisaber2004Consensus,Imai2009Distributed,6481629}. Nevertheless, due to the limitation of communication cost, 
	each agent cannot exchange their entire opinions completely.
	To deal with this problem, a communication scheme called social sampling is proposed \cite{6827923}.
	The idea of this scheme is that each agent can only share a sample generated randomly by following its current distribution. The distributed social sampling algorithm given in \cite{6827923} is a randomized approximation of consensus procedures, in which a group of agents aim to reach a common decision in a distributed way.  Since the transferred message is quantized as an identical vector, the computation complexity is  significantly reduced.
	However, the theoretical properties, such as the effects of network topology or the quantized process, have not been well investigated. Besides, one interesting theoretical problem is the asymptotic normality of estimation error.  The work  \cite{Rajagopal2008Network} considers such problem for certain cases of noisy communication in consensus 
	schemes for scalars and finds a connection between network topology and covariance matrix of the error limitation distribution. However, such a result cannot be transfered to distributed social sampling algorithm because of the quantized error term. In this paper, we will complete the analysis and establish asymptotic normality for the social sampling scenario.\\
	\indent The main contributions of this paper are threefold:
	\begin{description}
		\item[(i)] We provide a novel analysis framework based on stochastic approximation to study the asymptotic properties of the distributed social sampling algorithm over random networks. To ensure the convergence of the algorithm in an almost sure sense, we use the techniques of stochastic approximation \cite{Fang2012ON}, in which state space is decomposed into two parts:  consensus part and  vanishing part.  Besides, some  analysis methods provided in this paper can contribute to further related researches.	
		\item[(ii)] For consensus over random networks, the strong consensus is most desirable. Compared with \cite{6827923}, this is achieved in our work under the milder conditions on network structures and communication noise properties. We prove that the distribution estimates of agents reach consensus almost surely to the value related with true empirical distribution and the accumulation of quantized error. Besides, by tuning algorithm parameters, we prove the strong consistency, $i.e.$,  the distribution estimates of agents are almost surely convergent to the initial empirical distribution. On the other hand, unlike the fixed topology used in \cite{6827923}, the condition on network topology is relaxed to joint connectivity of mean digraphs for random networks. 
		\item[(iii)] We provide convergence rate of estimation of the social sampling protocol. Explicitly, we prove that the overall estimation errors of the algorithm  are asymptotically normal with zero mean and known covariance matrix. 
		The covariance matrix shows that how networks and quantized error influence the estimation performance of the distributed social sampling algorithm. Compared with \cite{Rajagopal2008Network}, the conditions on communication noise in this work are more general.
	\end{description}
	
	\par
	The reminder of the paper is organized as follows. Section {\bf \ref{Prelim}} provides some preliminary information about graph theory and the main problem we considered. In section {\bf \ref{dierjie}}, we describe the social sampling protocol and the stochastic algorithm studied in this paper. Convergence analysis for the distributed algorithm is given in section {\bf \ref{disanjie}}, while asymptotic properties are presented in section {\bf \ref{disijie}}.  Section {\bf \ref{diwujie}} shows a numerical simulation. Section {\bf \ref{diliujie}} gives some concluding remarks.
\subsection{Notations}
Let $\mathbf{e}_{i}\in \mathbb{R}^{M}$ stand for the  unit row vector whose $i$-th element equals to 1.  $\mathbf{I}_{\fkh{\cdot}}$ denotes the indicator function and $\mathbf{1}$ stands for the proper dimensional column vector with elements all being 1. $\mathbf{I}_{N}$ denotes the $N$-dimension identity matrix. The superscript `` T '' represents the transpose.  The abbreviation $i.i.d.$ stands for independent identically distribution of random variables,  $N\dkh{0,S }$ denotes the normal distribution with zero mean and covariance $S$. $E\fkh{x}$ denotes the mathematical expectation of the stochastic variable $x$. Notation $diag\dkh{\cdot}$ represents the diagonalization of scalar elements.
$\mathbb{R}^{n}$ represents the $n$-dimension Euclidean space. Besides, we denote $\fkh{N}=\hkh{1,2,\cdots,N}.$	
\section{Preliminaries}\label{Prelim}
\subsection{Graph Theory }
Let $\mathcal{G}=\dkh{\mathcal{V},\mathcal{E},\mathbf{W}}$ be a weighted digraph, where $\mathcal{V}=\fkh{N}$ is label set of $N$ agents, $\mathcal{E}\subset \mathcal{V}\times\mathcal{V}$ is the edge set, where an ordered pair $\dkh{i,j}\in\mathcal{E}$ means that agent $i$ can get information from agent $j$ directly. The {\it in-neighbor} set is denoted by $\mathcal{N}_{in}\dkh{i}=\hkh{j\in \mathcal{V}|\dkh{i,j}\in \mathcal{E}}$, while the {\it out-neighbor} set of agent $i$ is denoted by $\mathcal{N}_{out}\dkh{i}=\hkh{j\in \mathcal{V}|\dkh{j,i}\in \mathcal{E}}$.  The graph is undirected if it is bidirectional, $i.e.$ $\dkh{j,i}\in\mathcal{E}$ if and only if $\dkh{i,j}\in\mathcal{E}$. $\mathbf{W}=\fkh{w_{ij}}\in \mathbb{R}^{N\times N}$ is the weighted adjacency matrix of $\mathcal{G}$, where $w_{ij}>0$ if $\dkh{i,j}\in \mathcal{E}$, and $w_{ij}=0$ otherwise. The nonnegative matrix $\mathbf{W}$ is called row-wise stochastic if $\mathbf{W1}=\mathbf{1}$, and is called column-wise stochastic if $\mathbf{W}^{T}\mathbf{1}=\mathbf{1}$. We say $\mathbf{W}$ is double stochastic if it is both row-wise stochastic and column-wise stochastic.
\par
For any $i,j\in\fkh{N}$, the in-degree of agent $i$ is defined as $deg_{in}\dkh{i}=\sum_{j=1}^{N}w_{ij}$ and the out-degree of agent $i$ is defined as $deg_{out}\dkh{i}=\sum_{j=1}^{N}w_{ji}$. We say $\mathcal{G}$ is a balanced digraph,  if $deg_{in}\dkh{i}=deg_{out}\dkh{i}$ for any $i\in\fkh{N}$. The digraph $\mathcal{G}$ is strongly connected if for any pair $i,j\in \mathcal{V}$, there exists a  directed sequence of nodes $i_{1},i_{2},\dots,i_{p}\in \mathcal{V}$,  such that $\dkh{i,i_{1}}\in\mathcal{E}$,$\dkh{i_{1},i_{2}}\in\mathcal{E}$,$\dots$,$\dkh{i_{p},j}\in\mathcal{E}$. 
\par 
The network topology in this work is allowed to be time-varying, thus the weighted communication network at time $k$ is denoted by $\mathcal{G}_{k}=\dkh{\mathcal{V},\mathcal{E}_{k}, \mathbf{W}_{k}}$. The graph sequence $\hkh{\mathcal{G}_{k}}$ is called jointly connected, if there exists an integer $T>0$, such that $\dkh{\mathcal{V},\bigcup_{s=0}^{T}\mathcal{E}_{s}}$ is strongly connected.
\par 
Besides, we introduce a definition to characterize the asymptotic behavior of the agents.
\begin{dingyi}[Strong consensus \cite{huangminyi}]
	The estimates of agents $\dkh{i.e.~	Q_{i,k}}$ are said to reach strong consensus if there exists a random variable $\mathbf{q}^{\ast}$	such that,  with probability 1 and  for all $i\in\fkh{N}$, $\lim\limits_{k\rightarrow\infty}Q_{i,k}=\mathbf{q}^{\ast}$.  We also say that the estimates converge almost surely $\dkh{a.s.}$.
\end{dingyi}
\subsection{Distributed Learning of Distributions}  
Consider a network of $N$ agents. The communication relationship among agents is described by a sequence of directed graphs $\hkh{\mathcal{G}_{k}}$, where time is discrete and indexed by $k=\hkh{0,1,2,\dots}$. At initial time $k=0$,  every agent has a single discrete sample $X_{i}$ taking values in a finite state set $\chi=\fkh{M}$.  The problem we consider here is that agents in a network need to  learn the  empirical distribution $\it\Pi \triangleq \{{\it \Pi }\dkh{x},x \in \fkh{M}\}$, where
\begin{equation}\label{true distribution}
{\it \Pi }\dkh{x}=\frac{1}{N} \sum_{i=1}^{N} \mathbf{1}_{\fkh{X_{i}=x} }\mathbf{e}_{x},  \qquad   \forall x \in \fkh{M}=\hkh{1,2,\dots,M}.   
\end{equation}
It can be considered as the histogram of the initial distribution of the agent opinions over the network.
\par
For illustration, we consider a motivating example. A group of customers who want to buy  one product from $M$ competing new products.  Assume that every customer has an opinion or preference over each product, which can be quantized as scores. To get enough information, customers will communicate with their friends or neighbors about their opinions.  In the traditional message protocol \cite{Hegselmann2002Opinion,Zhang2013Opinion,Boyd2006Randomized}, the agents exchange their entire opinion histogram each time, which means customers will discuss the evaluations of every single product. This is not realistic, especially under the situation that there exist an enormous variety of products. A sample generated from the current estimate is transferred in social sampling protocol, which means customer simplifies the communication process and exchanges information about one kind of randomly selected product. The  analysis of this paper shows that the customer still could get enough information about all products. 

\section{Problem Setup}\label{dierjie}
\subsection{Algorithm Formulation}
In this paper, we aim to estimate this histogram through a randomized algorithm called {\it social sampling}\cite{6827923}. The algorithm is based on the sample generated from the current estimate $Q_{i,k}$ of the true distribution ${\it \Pi}$. At time $k$, each agent holds an internal estimate $Q_{i,k}$ of ${\it \Pi}$  with  $Q_{i,0}=\mathbf{e}_{X_{i}}$.  We treat $Q_{i,k}$ as a probability distribution of the elementary vectors $\hkh{\mathbf{e}_{m}:m\in \fkh{M}}$, so that $Q_{i,k}$ should be probability vector on $\chi=\fkh{M}$. Agent $i$ generates its message social sample $Y_{i,k}$ as a function of the internal estimate $Q_{i,k}$, then agent $i$ sends $Y_{i,k}$ to its out-neighbors and receives the in-neighbor messages $\hkh{Y_{j,k}:j\in \mathcal{N}_{i}}$. At each iteration, agent $i$ uses the samples from its neighbors and current estimate to obtain the updated estimate $Q_{i,k+1}$.
\par
We assume $Y_{i,k}\in\mathcal{Y}=\hkh{\mathbf{e}_{1},\dots, \mathbf{e}_{M}}$, which can be viewed as a  label of the opinion state space. So the opinion takes values from a finite, discrete value space. The random message $Y_{i,k}\in \mathcal{Y}$ of agent $i$ at time $k$ is  generated according to the distribution $P_{i,k}\in \mathbb{P}\dkh{\mathcal{Y}}$, which  is a function of the internal estimate $Q_{i,k}$. More precisely, $P_{i,k}$ is a $M$-dimension row probability vector where the $m$-th element $P_{i,k}^{m}=\mathbb{P}\dkh{Y_{i,k}=\mathbf{e}_{m}}$.
\begin{zhushi}{\rm 
		We can choose $P_{i,k}$ properly and make it be a correction term associated to the internal estimate $Q_{i,k}$. For example, we set $P_{i,k}=0$ when $Q_{i,k}<\alpha$, where $\alpha$ is a presetting bound. Under some complicated situations,  such as the opinion space being extremely large or the histogram being far from uniform, this kind of censoring can avoid inefficient communication. Of course, 
		we can also choose $P_{i,k}=Q_{i,k}$ in some simple situations.}
\end{zhushi}
Next, we will formulate the distributed social sampling algorithm and write it in a compact form. For notational convenience,  the social samples at time $k$ are denoted by a $NM$-dimension column vector $\mathbf{Y}_{k}\triangleq \dkh{Y_{1,k}^{T},Y_{2,k}^{T},\dots,Y_{N,k}^{T}}^{T}\in \mathbb{R}^{NM}$, which is generated from the sampling function $\mathbf{P}_{k}\triangleq \dkh{P_{1,k}^{T},P_{2,k}^{T},\dots,P_{N,k}^{T}}^{T}\in \mathbb{R}^{NM}$. Similarly, we set $\mathbf{Q}_{k}\triangleq \dkh{Q_{1,k}^{T},Q_{2,k}^{T},\dots,Q_{N,k}^{T}}^{T}\in \mathbb{R}^{NM}$.\\
\indent For agent $i$ at time $k$, the internal estimate $Q_{i,k}$  is updated in a distributed way as follows:
\begin{align}\label{linear update}
Q_{i,k+1}=&\dkh{1-\delta_{k}a_{ii}^{k}} Q_{i,k}- \delta_{k} b_{ii}^{k} Y_{i,k}\nonumber\\
& +\sum _{j\in \mathcal{N}_{i}\dkh{k}} \delta_{k}w_{ij}^{k} Y_{j,k},
\end{align}
where $a_{ii}^{k}$, $b_{ii}^{k}$ are  communication coefficients subject to $a_{ii}^{k}\geq 0$, $b_{ii}^{k}\geq 0$. $\mathbf{W}_k=\fkh{w_{ij}^{k}}\in \mathbb{R}^{N\times N}$ is the weighted adjacency matrix of the network topology and $\delta_{k}$ is the step size. By designing update procedure like this, we can add some reasonable assumptions on the coefficients to  guarantee that the internal estimate $Q_{i,k+1}$ is a probability vector on the opinion state space $\chi=\fkh{M}$ at any time for every $i\in \fkh{N}$.\\
\indent This paper focuses on solving the following two problems: {\it i) } analyze the conditions ensuring  the convergence of distributed social sampling algorithm $\dkh{\ref{linear update}}$ over random networks in the almost sure sense.
{\it ii) } derive the asymptotic normality of algorithm $\dkh{\ref{linear update}}$ and characterize the effect of random sampling protocol and network topology on the limit  covariance matrix.


\section{Consensus and Consistency}\label{disanjie}
In this section, we  provide  an analysis framework based on stochastic approximation to study the convergence of $\dkh{\ref{linear update}}$. \\
\indent Denote $\mathbf{A}_{k} \triangleq diag\dkh{a_{11}^{k},\cdots,a_{NN}^{k}}$, $\mathbf{B}_{k}\triangleq diag\dkh{b_{11}^{k},\cdots,b_{NN}^{k}}$,
then we can write $\dkh{\ref{linear update}}$ in a  compact form
\begin{align}\label{stochastic iteration}
\mathbf{Q}_{k+1}=\mathbf{Q}_{k}+&\delta_{k}\bigg\{\dkh{ \dkh{\mathbf{W}_{k}-\mathbf{B}_{k}-\mathbf{A}_{k}}\otimes\mathbf{I}_{M}}\mathbf{Q}_{k}\nonumber\\
&+\dkh{ \dkh{\mathbf{W}_{k}-\mathbf{B}_{k}}\otimes\mathbf{I}_{M}}\dkh{\mathbf{Y}_{k}-\mathbf{Q}_{k}}\bigg\},
\end{align} 
where ``$\otimes$" is the Kronecker product.
\par
Suppose that the $\sigma$-algebra  $\mathcal{F}_{k}\triangleq \sigma\hkh{Q_{i,0},\mathbf{W}_{t}, \mathbf{B}_{t}, 1\leq i \leq N,  0\leq t\leq k}$
is a filtration of the basic  probability space $\dkh{\Omega,\mathcal{F},\mathbb{P}}$. Hence $\mathbf{Q}_{k}$ is measurable with respect to $\mathcal{F}_{k}$. Given the update rule in $\dkh{\ref{stochastic iteration}}$, the consensus of the opinion dynamics is equivalent to the convergence of linear stochastic approximation algorithms. The linear matrix $\dkh{\mathbf{W}_{k}-\mathbf{B}_{k}-\mathbf{A}_{k}}\otimes\mathbf{I}_{M}$ represents the effect of network topology at time $k$, while $\dkh{\dkh{\mathbf{W}_{k}-\mathbf{B}_{k}}\otimes\mathbf{I}_{M}}\dkh{\mathbf{Y}_{k}-\mathbf{Q}_{k}}$ is the error item caused by the quantized data. 
\par
As shown in iteration $\dkh{\ref{stochastic iteration}}$,  the opinion formation process can be considered as a linear regression case of stochastic approximation. Next, we will analyze $\dkh{\ref{stochastic iteration}}$ with stochastic approximation. To begin with, the following assumptions are given.
\begin{description}
	\item[ $\mathbf{A1}$] $\delta_{k}\xrightarrow [k\rightarrow \infty]{}0$, $\delta_{k}>0$, $\sum_{k=0}^{\infty}\delta_{k}=\infty$,  $\sum_{k=0}^{\infty}\delta_{k}^{2}<\infty$ and $\frac{1}{\delta_{k+1}}-\frac{1}{\delta_{k}}\xrightarrow[k\rightarrow \infty]{}\delta\geq 0$.
	\item[ $\mathbf{A2}$] 	
	\begin{enumerate}
		\item[$\dkh{i}$] $\hkh{\mathbf{W}_{k}}_{k\geq 0}$ is an independent random sequence with expectation denoted by $\bar{\mathbf{W}}_{k}=E\fkh{\mathbf{W}_{k}}$ and the adjacency matrix  $\mathbf{W}_{k}=\fkh{w_{ij}^{k}}$ is double stochastic.	Besides, there exists an uniform bound $\bar{w}_{ij}^{k}>\tau>0$, $\forall k>0$ for all nonzero $\bar{w}_{ij}^{k}\neq 0$.
		\item[$\dkh{ii}$] There is an integer $T>0$,  such that the mean graph $\bar{\mathcal{G}}_{k}=\dkh{\mathcal{V},\mathcal{E}_{k},\bar{\mathbf{W}}_{k}}$ generated by $\hkh{\bar{\mathbf{W}}_{k}}$ is jointly connected in the fixed period $\fkh{k,k+T}$, $i.e.$ there exits an integer $T>0$, such that the graph  $\dkh{\mathcal{V},\bigcup_{s=0}^{T}E\{\mathcal{E}_{k+s}\}}$ is strongly connected.
	\end{enumerate}
\end{description}
\par
In addition, we need extra assumptions on the mixed coefficients and the social sampling protocol.
\begin{description}
	\item[ $\mathbf{A3}$] $\|\mathbf{P}_{k}-\mathbf{Q}_{k}\|^{2}\xrightarrow[k\rightarrow \infty]{}0,~a.s.$ .
	\item[ $\mathbf{A4}$] The communication coefficients $a_{ii}^{k}$ and $b_{ii}^{k}$ are chosen properly such that $a_{ii}^{k}+b_{ii}^{k}=1$ for any $k\geq 0$, $i.e.$,  $\mathbf{A}_{k}+\mathbf{B}_{k}=\mathbf{I}_{N}$.
\end{description}
\par
For convenience of analysis, we arrange  the algorithm $\dkh{\ref{stochastic iteration}}$ as follows.
\begin{align}
\mathbf{Q}_{k+1}=&\mathbf{Q}_{k}+\delta_{k}\bigg\{\dkh{ \dkh{\mathbf{W}_{k}-\mathbf{I}_{N}}\otimes\mathbf{I}_{M}}\mathbf{Q}_{k}\nonumber\\
&+\dkh{ \dkh{\mathbf{W}_{k}-\mathbf{B}_{k}}\otimes\mathbf{I}_{M}}\dkh{\mathbf{Y}_{k}-\mathbf{Q}_{k}}\bigg\}\nonumber\\
=&\mathbf{Q}_{k}+\delta_{k}\bigg\{\dkh{ \dkh{\bar{\mathbf{W}}_{k}-\mathbf{I}_{N}}\otimes\mathbf{I}_{M}}\mathbf{Q}_{k}\nonumber\\
&+\dkh{ \dkh{\bar{\mathbf{W}}_{k}-\mathbf{B}_{k}}\otimes\mathbf{I}_{M}}\dkh{\mathbf{P}_{k}-\mathbf{Q}_{k}}\nonumber\nonumber\\
&+\dkh{ \dkh{\mathbf{W}_{k}-\bar{\mathbf{W}}_{k}}\otimes\mathbf{I}_{M}}\mathbf{P}_{k}\nonumber\\
&+\dkh{ \dkh{\mathbf{W}_{k}-\mathbf{B}_{k}}\otimes\mathbf{I}_{M}}\dkh{\mathbf{Y}_{k}-\mathbf{P}_{k}}\bigg\}.
\end{align}
\par
Denote 
\begin{align}
\left\{
\begin{array}{l}
\bar{\mathbf{H}}_{k}\triangleq \dkh{\bar{\mathbf{W}}_{k}-\mathbf{I}_{N}}\otimes\mathbf{I}_{M},\\
\mathbf{C}_{k}\triangleq \dkh{\dkh{\bar{\mathbf{W}}_{k}-\mathbf{B}_{k}}\otimes\mathbf{I}_{M}}\dkh{\mathbf{P}_{k}-\mathbf{Q}_{k}},\\
\mathbf{M}_{k}\triangleq \dkh{\dkh{\mathbf{W}_{k}-\mathbf{B}_{k}}\otimes\mathbf{I}_{M}}\dkh{\mathbf{Y}_{k}-\mathbf{P}_{k}}\\
\quad\qquad+\dkh{ \dkh{\mathbf{W}_{k}-\bar{\mathbf{W}}_{k}}\otimes   \mathbf{I}_{M}}\mathbf{P}_{k}, 
\end{array}
\right.\label{yangchalie}
\end{align}
then we have 
\begin{equation}\label{stochastic iteration 2}
\mathbf{Q}_{k+1}=\mathbf{Q}_{k}+ \delta_{k}\dkh{\bar{\mathbf{H}}_{k}\mathbf{Q}_{k}+\mathbf{C}_{k}+\mathbf{M}_{k}}.
\end{equation}

\begin{zhushi}
	{\rm 	Condition $\mathbf{A1}$ can be automatically satisfied if $\delta_{k}=\frac{a}{k^{\delta}}$ with $a>0$, $\delta\in \left( \frac{1}{2}, 1\right]$. In fact, we can pick $\mathbf{P}_{k}=\mathbf{Q}_{k}$, $i.e.$,  we generate social sample from internal estimate directly without censoring, which means   $\mathbf{C}_{k}=0$. The double stochastic assumption on $\bar{\mathbf{W}}_{k}$ means that the mean graph should be balanced.  The  lower bound $\tau$ in $\mathbf{A2}$ for the nonzero elements is used to guarantee the stability of linear matrix sequence  $\hkh{\mathbf{H}_{k}}$, which is easily satisfied in the case where the network is switched over a finite number of network topologies.
	}
\end{zhushi}

\par
Before presenting the consensus results for the  algorithm $\dkh{\ref{stochastic iteration 2}}$, we provide the  following lemma.
\begin{yinli}[see \cite{Fang2012ON}]\label{fang}
	Let $\hkh{H_{t}}$ be $n\times n$-matrices with $\sup_{t}\|H_{t}\|<\infty$. Assume that there is an $n\times n$-matrix $U>0$ and an integer $K>0$ such that for $\forall\  t\geq 0,$ 
	\begin{align*}
	UH_{t}+H_{t}^{T}U\leq 0 \text { and } \sum_{s=t}^{t+K}\dkh{UH_{s}^{T}+H_{s}U}\leq -\beta I_n, \beta>0.
	\end{align*}
	If step-size  $\hkh{\delta_{k}}$ satisfies $\mathbf{A1}$ and  $\omega_{t}$ can be expressed as $\omega_{t}=\mu_{t}+\nu_{t}$ where
	\begin{align*}
	\sum_{t=1}^{\infty} \delta_{t}\mu_{t+1}<\infty \text{ and } \nu_{t}\xrightarrow[t\rightarrow \infty]{}0,~a.s..
	\end{align*}
	Then, for an arbitrary initial value $x_{0}$, the sequence $\hkh{x_{t}}$ generated by $x_{t+1}=x_{t}+\delta_{t}\dkh{H_{t}x_{t}+\omega_{t}}$
	converges to zero almost surely.
\end{yinli}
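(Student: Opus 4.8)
The plan is to run a Lyapunov argument on the quadratic form $V_t \triangleq x_t^{\mathrm T} U x_t$, exploiting the two spectral hypotheses at two different scales. First I would substitute the recursion $x_{t+1}=x_t+\delta_t\dkh{H_t x_t+\omega_t}$ directly into $V_{t+1}$ and expand, obtaining
\begin{equation*}
V_{t+1}=V_t+\delta_t\, x_t^{\mathrm T}\dkh{U H_t+H_t^{\mathrm T}U}x_t+2\delta_t\, x_t^{\mathrm T}U\omega_t+\delta_t^{2}\dkh{H_t x_t+\omega_t}^{\mathrm T}U\dkh{H_t x_t+\omega_t}.
\end{equation*}
The per-step condition $U H_t+H_t^{\mathrm T}U\le 0$ makes the first correction nonpositive at every single step, so $V$ is monotone up to noise; crucially, a single step gives only a \emph{weak} (possibly null) decrease, which is why the summed window condition is needed to extract genuine contraction. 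Since $\sup_t\|H_t\|<\infty$ and $U>0$, I can bound the $\delta_t^2$ remainder by $\delta_t^{2}C\dkh{\|x_t\|^{2}+\|\omega_t\|^{2}}$ and freely pass between $V_t$ and $\|x_t\|^{2}$ via $\lambda_{\min}(U)\|x_t\|^{2}\le V_t\le\lambda_{\max}(U)\|x_t\|^{2}$.

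The second step would establish stability, i.e. $\sup_t\|x_t\|<\infty$ a.s. Writing the cross term through $\omega_t=\mu_t+\nu_t$, the martingale-type part $\delta_t x_t^{\mathrm T}U\mu_t$ is controlled by Abel summation against the convergent series $\sum_t\delta_t\mu_{t+1}$, the quadratic remainder is summable because $\sum_t\delta_t^{2}<\infty$, and the bias part $\delta_t x_t^{\mathrm T}U\nu_t$ is eventually small since $\nu_t\to 0$. An almost-supermartingale (Robbins--Siegmund) argument then yields both boundedness of $\{x_t\}$ and a.s. convergence of $V_t$ to a finite limit.

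With $\{x_t\}$ bounded, the third step upgrades to convergence to zero by window averaging. Over each block $[t,t+K]$ I would sum the Lyapunov increments and invoke the strict condition $\sum_{s=t}^{t+K}\dkh{U H_s^{\mathrm T}+H_s U}\le-\beta I_n$. Two freezing estimates make this work: boundedness together with $\delta_t\to 0$ gives $\|x_s-x_t\|=O(\delta_t)$ across the block, so each $x_s$ may be replaced by $x_t$ up to higher-order terms; and the step-size regularity in $\mathbf{A1}$, namely $\tfrac{1}{\delta_{k+1}}-\tfrac{1}{\delta_k}\to\delta$, forces $\delta_{t+j}/\delta_t\to 1$ for each fixed $j\le K$, so the step size may be frozen too. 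Telescoping the block then yields
\begin{equation*}
V_{t+K+1}\le V_t-\beta'\delta_t\|x_t\|^{2}+\eta_t,
\end{equation*}
where $\beta'>0$ comes from $\beta$ and $\lambda_{\min}(U)$ and $\eta_t$ collects the martingale, $O(\delta_t^2)$, and vanishing-bias remainders. Summing over blocks and using $\sum_t\delta_t=\infty$ forces $\liminf_t\|x_t\|=0$; since $V_t$ already converges, its limit must be $0$, whence $x_t\to 0$ a.s.

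The hard part will be the window-averaging estimate in the third step, specifically controlling the error incurred by freezing $x_s\approx x_t$ and $\delta_s\approx\delta_t$ while simultaneously tracking the noise so that the accumulated remainder $\eta_t$ does not swamp the $-\beta'\delta_t\|x_t\|^2$ gain. The subtle point is that the bias contribution from $\nu_t$ is only vanishing, not summable, so a naive single-step supermartingale bound is insufficient; it is precisely the uniform block-contraction furnished by $\beta$, combined with the step-size regularity of $\mathbf{A1}$, that lets a vanishing perturbation be absorbed into a strictly negative drift and drives $\|x_t\|$ to zero.
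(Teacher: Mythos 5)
You should first be aware that the paper offers no proof of Lemma \ref{fang} at all: it is imported verbatim from \cite{Fang2012ON}, so your sketch has to stand on its own. Its ingredients are the right ones --- the Lyapunov function $V_t=x_t^{\mathrm T}Ux_t$, per-step nonpositivity of $UH_t+H_t^{\mathrm T}U$ for weak monotonicity, Abel summation against the convergent series $\sum_t\delta_t\mu_{t+1}$, and a frozen-block contraction extracted from $\sum_{s=t}^{t+K}\dkh{UH_s^{\mathrm T}+H_sU}\leq-\beta I_n$ --- and you correctly observe at the end that the vanishing-but-nonsummable bias $\nu_t$ can only be absorbed by a strictly negative drift.

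The genuine gap is in your Step 2, and it makes the overall structure circular. You claim Robbins--Siegmund yields $\sup_t\|x_t\|<\infty$ and a.s.\ convergence of $V_t$ using only the per-step condition. It cannot: the cross term $2\delta_t x_t^{\mathrm T}U\nu_t$ is bounded only by $c\,\delta_t\|\nu_t\|\dkh{1+V_t}$, and $\sum_t\delta_t\|\nu_t\|$ may diverge (e.g.\ $\nu_t=1/\log t$, $\delta_t=1/t$), so neither the multiplicative nor the additive perturbation in the almost-supermartingale inequality is summable; the Abel-summation remainder $\sum_t\dkh{x_{t+1}-x_t}^{\mathrm T}US_{t+1}$, with $S_t$ the tails of $\sum_s\delta_s\mu_{s+1}$, is likewise only of order $\sum_t\delta_t\|x_t\|\cdot o\dkh{1}$. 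Indeed, the scalar example $H_t\equiv 0$, $\mu_t\equiv 0$, $\nu_t=1/\log t$ satisfies every hypothesis your Step 2 invokes, yet $x_{t+1}=x_t+\delta_t\nu_t$ diverges --- boundedness is not a consequence of $UH_t+H_t^{\mathrm T}U\leq 0$ plus the noise conditions, but of the block contraction. Since your Step 3 freezing estimate $\|x_s-x_t\|=O\dkh{\delta_t}$ presupposes the boundedness of Step 2, while Step 2 actually needs the contraction of Step 3, the argument as ordered does not close. The repair (and the route taken in \cite{Fang2012ON}) is to run the block recursion directly, obtaining $V_{t+K+1}\leq\dkh{1-\beta'\delta_t}V_t+\delta_t\epsilon_t\dkh{1+V_t}+\eta_t$ with $\epsilon_t\to 0$ and $\sum_t\eta_t<\infty$ \emph{without} assuming boundedness (e.g.\ by a stopping-time or contradiction argument on exit times of $\hkh{\|x_t\|\leq R}$), from which boundedness and $V_t\to 0$ follow together once $\epsilon_t<\beta'$.
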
 

\par 

Note that,  in expression \eqref{stochastic iteration 2}, we have separated  the quantized error into two parts: $\mathbf{C}_{k}$ is a censoring item associated  to the difference between the internal matrix $\mathbf{Q}_{k}$ and the sampling matrix $\mathbf{P}_{k}$,  and $\mathbf{M}_{k}$ is a martingale difference sequence, which will be demonstrated in the following lemma.
\begin{yinli}\label{proof}
	$\dkh{\mathbf{M}_{k}, \mathcal{F}_{k}}$ is a martingale difference sequence under $\mathbf{A2}$.   	
\end{yinli}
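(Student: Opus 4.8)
The plan is to verify the two defining properties of a martingale difference sequence: that each $\mathbf{M}_k$ is integrable and adapted, and that its conditional mean given the past vanishes, i.e. $E[\mathbf{M}_k \mid \mathcal{F}_{k-1}] = 0$. Here $\mathcal{F}_{k-1}$ is the natural history that already excludes the fresh time-$k$ draws $\mathbf{W}_k$ and $\mathbf{Y}_k$, while still rendering $\mathbf{Q}_k$ (hence $\mathbf{P}_k$) measurable, since $\mathbf{Q}_k$ is built only from data generated strictly before step $k$. Integrability is immediate: $\mathbf{Y}_k$ and $\mathbf{P}_k$ are probability vectors, $\mathbf{W}_k$ is doubly stochastic by $\mathbf{A2}(i)$, and $\mathbf{B}_k$ is diagonal with entries in $[0,1]$ because $a_{ii}^{k}+b_{ii}^{k}=1$ with $a_{ii}^{k},b_{ii}^{k}\geq 0$ by $\mathbf{A4}$; hence every factor in \eqref{yangchalie} is uniformly bounded and $\mathbf{M}_k$ is bounded. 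The substance of the argument is the vanishing of the conditional mean, which I would obtain by treating the two summands of $\mathbf{M}_k$ separately, as they encode two distinct and mutually independent sources of randomness, namely the network realization $\mathbf{W}_k$ and the social sample $\mathbf{Y}_k$.

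For the network-noise term $((\mathbf{W}_k - \bar{\mathbf{W}}_k)\otimes \mathbf{I}_M)\mathbf{P}_k$, the key point is that $\mathbf{Q}_k$, and therefore $\mathbf{P}_k$ as a function of $\mathbf{Q}_k$, is $\mathcal{F}_{k-1}$-measurable, whereas $\mathbf{W}_k$ is independent of $\mathcal{F}_{k-1}$ by the independence of the sequence $\{\mathbf{W}_k\}$ assumed in $\mathbf{A2}(i)$. Pulling the measurable factor $\mathbf{P}_k$ out of the conditional expectation and using $E[\mathbf{W}_k \mid \mathcal{F}_{k-1}] = E[\mathbf{W}_k] = \bar{\mathbf{W}}_k$ gives $E[(\mathbf{W}_k - \bar{\mathbf{W}}_k)\otimes \mathbf{I}_M \mid \mathcal{F}_{k-1}] = 0$, so this term drops out identically.

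For the sampling-noise term $((\mathbf{W}_k - \mathbf{B}_k)\otimes \mathbf{I}_M)(\mathbf{Y}_k - \mathbf{P}_k)$, I would invoke the tower property, conditioning additionally on $(\mathbf{W}_k,\mathbf{B}_k)$. Given $\mathcal{F}_{k-1}$ together with the network realization, the matrix $(\mathbf{W}_k - \mathbf{B}_k)\otimes \mathbf{I}_M$ is known and factors out, so it remains to show $E[\mathbf{Y}_k - \mathbf{P}_k \mid \mathcal{F}_{k-1}, \mathbf{W}_k, \mathbf{B}_k] = 0$. This follows from the sampling mechanism: each $Y_{i,k}$ equals $\mathbf{e}_m$ with probability $P_{i,k}^{m}$ and is drawn only from the $\mathcal{F}_{k-1}$-measurable distribution $\mathbf{P}_k$, independently of the network and coefficient draws, whence $E[Y_{i,k}\mid \cdots] = \sum_{m} P_{i,k}^{m}\,\mathbf{e}_m = P_{i,k}$; stacking over $i$ yields $E[\mathbf{Y}_k \mid \mathcal{F}_{k-1},\mathbf{W}_k,\mathbf{B}_k] = \mathbf{P}_k$, so the inner conditional expectation is zero and the whole term collapses.

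Combining the two computations gives $E[\mathbf{M}_k \mid \mathcal{F}_{k-1}] = 0$, which together with boundedness establishes that $(\mathbf{M}_k, \mathcal{F}_k)$ is a martingale difference sequence. The step I would be most careful about is the bookkeeping of the two independent randomness sources and the order of conditioning: $\mathbf{W}_k$ must be averaged against the independent past in the network term but treated as known in the sampling term, and one must use the conditional independence of the sample $\mathbf{Y}_k$ from $(\mathbf{W}_k,\mathbf{B}_k)$ so that enlarging the conditioning does not alter its mean $\mathbf{P}_k$. Once the sample history is understood to belong to the filtration so that $\mathbf{Q}_k$ and $\mathbf{P}_k$ are genuinely past-measurable, the remainder is a routine application of the tower property and linearity of conditional expectation.
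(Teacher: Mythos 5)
Your proposal is correct and follows essentially the same route as the paper's proof in Appendix A: split $\mathbf{M}_k$ into its two summands, use $E[\mathbf{Y}_k\mid\cdot]=\mathbf{P}_k$ from the sampling mechanism and $E[\mathbf{W}_k]=\bar{\mathbf{W}}_k$ together with the $\mathcal{F}_{k-1}$-measurability of $\mathbf{P}_k$ to kill each term. Your version is in fact slightly more careful than the paper's, which factors the conditional expectation of the product directly into a product of expectations without spelling out the tower-property step and the conditional independence of $\mathbf{Y}_k$ from $(\mathbf{W}_k,\mathbf{B}_k)$ that you make explicit.
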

\begin{proof}
	See the proof in Appendix A.
\end{proof}
\par
As claimed above, for the opinion dynamic consensus we only need to show the convergence of $\mathbf{Q}_{k}$ given by $\dkh{\ref{stochastic iteration 2}}$ almost surely. This is given by the following theorem. It is shown that all estimates of agents will achieve consensus to a common estimate based on empirical distribution.
\begin{dingli}[\textbf{Consensus}]\label{theorem}
	Let $\hkh{\mathbf{Q}_{k}}$ be generated by the algorithm $\dkh{\ref{stochastic iteration 2}}$. Under the conditions $\mathbf{A1}$, $\mathbf{A2}$, $\mathbf{A3}$ and $\mathbf{A4}$, we have  
	\begin{equation}
	\lim\limits_{k\rightarrow \infty}\mathbf{Q}_{k} =\mathbf{Q}^{\ast}, ~~~a.s. ,
	\end{equation}
	where  $\mathbf{Q}^{\ast}=\mathbf{1}\otimes q^{\ast}$.  Explicitly,
	$\lim\limits_{k \rightarrow \infty}Q_{i,k}= q^{\ast}$, a.s.,
	where
	\begin{align*}
	&q^{\ast}\triangleq\dkh{\frac{1}{N}\mathbf{1}^{T}\otimes\mathbf{I}_{M}}\mathbf{Q}_{0}\\
	&+\frac{1}{N}\sum_{k=0}^{\infty}\delta_{k}\dkh{\mathbf{1}^{T}\dkh{\mathbf{W}_{k}-\mathbf{B}_{k}}\otimes\mathbf{I}_{M}}\dkh{\mathbf{Y}_{k}-\mathbf{Q}_{k}}, ~~\forall i \in \fkh{N}.
	\end{align*}	
	Furthermore, if    $\|\mathbf{P}_{k}-\mathbf{Q}_{k}\|=O\dkh{\delta_{k}}$,  then $q^{\ast}<\infty,~$ a.s..
\end{dingli}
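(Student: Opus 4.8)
The plan is to split the $NM$-dimensional iterate into a \emph{network-average} (consensus) component and a \emph{disagreement} component, show that the average converges to $q^{\ast}$ while the disagreement vanishes, and then recombine. Introduce the averaging projection $\mathbf{J}\triangleq\dkh{\frac{1}{N}\mathbf{1}\mathbf{1}^{T}}\otimes\mathbf{I}_{M}$ and set $\bar{Q}_{k}\triangleq\dkh{\frac{1}{N}\mathbf{1}^{T}\otimes\mathbf{I}_{M}}\mathbf{Q}_{k}$, $\tilde{\mathbf{Q}}_{k}\triangleq\dkh{\mathbf{I}_{NM}-\mathbf{J}}\mathbf{Q}_{k}$, so that $\mathbf{Q}_{k}=\mathbf{1}\otimes\bar{Q}_{k}+\tilde{\mathbf{Q}}_{k}$. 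The double stochasticity of $\mathbf{W}_{k}$ and $\bar{\mathbf{W}}_{k}$ (from $\mathbf{A2}(i)$) is exactly what makes the split clean: it gives $\mathbf{1}^{T}\dkh{\mathbf{W}_{k}-\mathbf{I}_{N}}=0$ as well as $\mathbf{J}\bar{\mathbf{H}}_{k}=0$ and $\bar{\mathbf{H}}_{k}\mathbf{J}=0$, so the topology term neither moves the average nor feeds the consensus direction back into the disagreement.

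For the average, apply $\dkh{\frac{1}{N}\mathbf{1}^{T}\otimes\mathbf{I}_{M}}$ to $\dkh{\ref{stochastic iteration}}$ using $\mathbf{A}_{k}+\mathbf{B}_{k}=\mathbf{I}_{N}$ ($\mathbf{A4}$). The term $\dkh{\dkh{\mathbf{W}_{k}-\mathbf{I}_{N}}\otimes\mathbf{I}_{M}}\mathbf{Q}_{k}$ is annihilated, leaving the telescoping recursion $\bar{Q}_{k+1}=\bar{Q}_{k}+\frac{\delta_{k}}{N}\dkh{\mathbf{1}^{T}\dkh{\mathbf{W}_{k}-\mathbf{B}_{k}}\otimes\mathbf{I}_{M}}\dkh{\mathbf{Y}_{k}-\mathbf{Q}_{k}}$; summing from $0$ to $k-1$ reproduces exactly the claimed expression for $q^{\ast}$ with $\bar{Q}_{0}=\dkh{\frac{1}{N}\mathbf{1}^{T}\otimes\mathbf{I}_{M}}\mathbf{Q}_{0}$, so it remains to prove the series converges. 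Split $\mathbf{Y}_{k}-\mathbf{Q}_{k}=\dkh{\mathbf{Y}_{k}-\mathbf{P}_{k}}+\dkh{\mathbf{P}_{k}-\mathbf{Q}_{k}}$: the $\mathbf{Y}_{k}-\mathbf{P}_{k}$ part is a weighted sum of the martingale differences of Lemma~\ref{proof}, whose quadratic variation is controlled by $\sum_{k}\delta_{k}^{2}<\infty$ ($\mathbf{A1}$) together with the boundedness of $\mathbf{Y}_{k},\mathbf{P}_{k}$, hence it converges a.s.\ by the martingale convergence theorem; the censoring part $\sum_{k}\delta_{k}\dkh{\cdots}\dkh{\mathbf{P}_{k}-\mathbf{Q}_{k}}$ is absolutely summable once $\|\mathbf{P}_{k}-\mathbf{Q}_{k}\|=O\dkh{\delta_{k}}$, since then it is dominated by a constant times $\sum_{k}\delta_{k}^{2}<\infty$. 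This yields both $\bar{Q}_{k}\to q^{\ast}$ and $q^{\ast}<\infty$ a.s.

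For the disagreement, subtract the consensus part from $\dkh{\ref{stochastic iteration 2}}$. Because $\mathbf{J}\bar{\mathbf{H}}_{k}=0$ and $\bar{\mathbf{H}}_{k}\mathbf{J}=0$ one obtains $\tilde{\mathbf{Q}}_{k+1}=\tilde{\mathbf{Q}}_{k}+\delta_{k}\dkh{\bar{\mathbf{H}}_{k}\tilde{\mathbf{Q}}_{k}+\dkh{\mathbf{I}_{NM}-\mathbf{J}}\mathbf{C}_{k}+\dkh{\mathbf{I}_{NM}-\mathbf{J}}\mathbf{M}_{k}}$, which is precisely the template of Lemma~\ref{fang} on the subspace $\mathrm{range}\dkh{\mathbf{I}_{NM}-\mathbf{J}}$ with $x_{t}=\tilde{\mathbf{Q}}_{k}$, $H_{t}=\bar{\mathbf{H}}_{k}$, $\omega_{t}=\dkh{\mathbf{I}_{NM}-\mathbf{J}}\dkh{\mathbf{C}_{k}+\mathbf{M}_{k}}$. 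Take $U=\mathbf{I}$: boundedness $\sup_{k}\|\bar{\mathbf{H}}_{k}\|<\infty$ is immediate from stochasticity, and $U\bar{\mathbf{H}}_{k}+\bar{\mathbf{H}}_{k}^{T}U=\dkh{\bar{\mathbf{W}}_{k}+\bar{\mathbf{W}}_{k}^{T}-2\mathbf{I}_{N}}\otimes\mathbf{I}_{M}\le0$, since a doubly stochastic $\bar{\mathbf{W}}_{k}$ has $\bar{\mathbf{W}}_{k}+\bar{\mathbf{W}}_{k}^{T}$ with eigenvalues bounded above by $2$. For the decomposition $\omega_{t}=\mu_{t}+\nu_{t}$, put $\mu_{t+1}=\dkh{\mathbf{I}_{NM}-\mathbf{J}}\mathbf{M}_{k}$ (so $\sum_{t}\delta_{t}\mu_{t+1}<\infty$ by the same martingale argument as above) and $\nu_{t}=\dkh{\mathbf{I}_{NM}-\mathbf{J}}\mathbf{C}_{k}$, which tends to $0$ a.s.\ because $\|\mathbf{C}_{k}\|\le C\|\mathbf{P}_{k}-\mathbf{Q}_{k}\|\to0$ by $\mathbf{A3}$. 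Lemma~\ref{fang} then delivers $\tilde{\mathbf{Q}}_{k}\to0$ a.s.

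The hard part will be verifying the window condition $\sum_{s=t}^{t+K}\dkh{U\bar{\mathbf{H}}_{s}^{T}+\bar{\mathbf{H}}_{s}U}\le-\beta\mathbf{I}$ on the disagreement subspace with a \emph{uniform} $\beta>0$, which is the genuinely graph-theoretic step. The matrix $\sum_{s=t}^{t+K}\dkh{2\mathbf{I}_{N}-\bar{\mathbf{W}}_{s}-\bar{\mathbf{W}}_{s}^{T}}$ is the Laplacian of the union graph over the window; choosing $K$ a multiple of $T$, joint connectivity $\mathbf{A2}(ii)$ makes this union connected, hence positive definite on $\hkh{\mathbf{1}}^{\perp}$, and the uniform lower bound $\tau$ on nonzero mean weights in $\mathbf{A2}(i)$ is what converts this into a spectral gap $\beta$ that is independent of $t$, so the estimate holds along the whole time axis. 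Finally, combining the two components, under $\mathbf{A1}$--$\mathbf{A4}$ the disagreement $\tilde{\mathbf{Q}}_{k}\to0$ (strong consensus to the common trajectory $\bar{Q}_{k}$), and adjoining the growth bound $\|\mathbf{P}_{k}-\mathbf{Q}_{k}\|=O\dkh{\delta_{k}}$ gives $\bar{Q}_{k}\to q^{\ast}<\infty$; hence $\mathbf{Q}_{k}=\mathbf{1}\otimes\bar{Q}_{k}+\tilde{\mathbf{Q}}_{k}\to\mathbf{1}\otimes q^{\ast}=\mathbf{Q}^{\ast}$ a.s., and in particular every $Q_{i,k}\to q^{\ast}$.
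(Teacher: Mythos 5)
Your proposal is correct and follows essentially the same route as the paper: decompose $\mathbf{Q}_{k}$ into the network average and the disagreement, drive the disagreement to zero via Lemma~\ref{fang} (using double stochasticity for negative semidefiniteness and joint connectivity plus the lower bound $\tau$ for the windowed gap), and telescope the average recursion to identify $q^{\ast}$, with the martingale convergence theorem and $\sum_{k}\delta_{k}^{2}<\infty$ handling the noise and the $O\dkh{\delta_{k}}$ censoring bound giving finiteness. The only cosmetic difference is that you keep the disagreement as the projected vector $\dkh{\mathbf{I}_{NM}-\mathbf{J}}\mathbf{Q}_{k}$ and invoke the lemma on the subspace $\mathrm{range}\dkh{\mathbf{I}_{NM}-\mathbf{J}}$, whereas the paper conjugates by $T_{1}$ to obtain a reduced $\dkh{N-1}M$-dimensional iterate so that the lemma's full-rank condition $\sum_{s}\dkh{\mathbf{F}_{s}+\mathbf{F}_{s}^{T}}\leq-\beta\mathbf{I}$ applies literally.
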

\begin{proof}
	The  mixed-product property of Kronecker product, $\dkh{A\otimes B}\dkh{C\otimes D}=AC\otimes BD$, will be  frequently used in the following.
	\par
	Firstly, we write $\mathbf{Q}_{k}$ as a sum of a vector in the consensus space and a disagreement vector by orthogonal decomposition. Let  $T\triangleq\fkh{\begin{matrix}
		T_{1}\\\frac{1}{\sqrt{N}}\mathbf{1}^{T}
		\end{matrix}}$
	be an orthogonal matrix, then we have $T_{1} \mathbf{1}=\mathbf{0}$ and  $T_{1}T_{1}^{T}=\mathbf{I}_{N-1}$. Set ${\it\Gamma}\triangleq \mathbf{I}_{N}- \frac{1}{N} 
	\mathbf{1} \mathbf{1}^{T}$ ,  then $T {\it\Gamma} =\fkh{\begin{matrix}
		T_{1}\\\mathbf{0}
		\end{matrix}}$.
	Pre-multiplying  $\dkh{\ref{stochastic iteration}}$ by $T{\it\Gamma}\otimes \mathbf{I}_{M}$ yields 	
	\begin{align}
	&\dkh{\fkh{\begin{matrix}
			T_{1}\\\mathbf{0}
			\end{matrix}}\otimes \mathbf{I}_{M}}\mathbf{Q}_{k+1}\nonumber\\
		=& \dkh{\fkh{\begin{matrix}
			T_{1}\\\mathbf{0}
			\end{matrix}}\otimes \mathbf{I}_{M}}\mathbf{Q}_{k}+ \delta_{k}\bigg\{\dkh{\fkh{\begin{matrix}
			T_{1}\\\mathbf{0}
			\end{matrix}}\otimes \mathbf{I}_{M}}
	\bar{\mathbf{H}}_{k}\mathbf{Q}_{k}\nonumber\\
	&+\fkh{\dkh{\begin{matrix}
			T_{1}\\\mathbf{0}
			\end{matrix}}\otimes \mathbf{I}_{M}} \mathbf{C}_{k}+\dkh{\fkh{\begin{matrix}
			T_{1}\\\mathbf{0}
			\end{matrix}}\otimes \mathbf{I}_{M}}\mathbf{M}_{k}\bigg\}.
	\end{align}	
	Setting $\xi_{k} \triangleq \dkh{T_{1}\otimes \mathbf{I}_{M}}\mathbf{Q}_{k}$,  we obtain 
	\begin{align}\label{jiangweixiangliang}
	\dkh{T{\it\Gamma}\otimes \mathbf{I}_{M}}\mathbf{Q}_{k}=\fkh{\xi_{k}^{T},0}^{T},
	\end{align}
	and
	\begin{align}\label{jiangwei}
	\xi_{k+1}=&\xi_{k}+ \delta_{k}\hkh{\dkh{T_{1}\otimes \mathbf{I}_{M}}\{\dkh{\bar{\mathbf{W}}_{k}-\mathbf{I}_{N}}\otimes\mathbf{I}_{M}}\nonumber\\
		&\cdot\dkh{T_{1}^{T}\otimes \mathbf{I}_{M}}\xi_{k}+\dkh{T_{1}\otimes\mathbf{I}_{M}}\dkh{\mathbf{C}_{k}+\mathbf{M}_{k}}\}\nonumber\\
	=&\xi_{k}+ \delta_{k}\{\dkh{\dkh{T_{1}\dkh{\bar{\mathbf{W}}_{k}-\mathbf{I}_{N}}T_{1}^{T}}\otimes\mathbf{I}_{M}}\xi_{k}\nonumber\\
		&+\dkh{T_{1}\otimes\mathbf{I}_{M}}\mathbf{C}_{k}+\dkh{T_{1}\otimes\mathbf{I}_{M}}\mathbf{M}_{k}\}.
	\end{align} 
	\par
	Denote $\mathbf{F}_{k}\triangleq T_{1}\dkh{\bar{\mathbf{W}}_{k}-\mathbf{I}_{N}}T_{1}^{T}=T_{1}\bar{\mathbf{W}}_{k}T_{1}^{T}-\mathbf{I}_{N-1}$.  To use Lemma \ref{fang},  we need to verify the stability of matrix sequence  $\hkh{\mathbf{F}_{k}\otimes \mathbf{I}_{M}}$.  Since the adjacency matrix $\mathbf{W}_{k}$ is double stochastic, $i.e.$, $\mathbf{1}^{T}\mathbf{W}_{k}=\mathbf{1}^{T}$ and $\mathbf{W}_{k}\mathbf{1}=\mathbf{1}$, then 
	$\bar{\mathbf{W}}_{k}$ has the single largest eigenvalue 1 by Perron's theorem \cite{xiandai}.  
	Hence, $\frac{\dkh{\bar{\mathbf{W}}_{k}+\bar{\mathbf{W}}_{k}^{T}}}{2}$ is a symmetric stochastic matrix which has the largest eigenvalue 1, and the eigenvector associated with $1$ is $\mathbf{1}\in \mathbb{R}^{N}$. Now, for any nonzero column vector  $z\in \mathbf{R}^{N}$, 
	$$z^{T}\bar{\mathbf{W}}_{k}z=z^{T}\frac{\dkh{\bar{\mathbf{W}}_{k}+\bar{\mathbf{W}}_{k}^{T}}}{2}z\leq z^{T}z.$$  
	Moreover, for any nonzero $u\in\mathbf{R}^{N-1}$, 
	\begin{align}\label{eqn-W}
	&u\dkh{T_{1}\bar{\mathbf{W}}_{k}T_{1}^{T}-\mathbf{I}_{N-1}}u^{T}\nonumber\\
	=&\dkh{uT_{1}}\bar{\mathbf{W}}_{k}\dkh{uT_{1}}^{T}-\dkh{uT_{1}}\dkh{uT_{1}}^{T}\leq 0.
	\end{align}
	Similarly, 
	\begin{align}\label{eqn-Wt}
	&u\dkh{T_{1}\bar{\mathbf{W}}_{k}^{T}T_{1}^{T}-\mathbf{I}_{N-1}}u^{T}\nonumber\\
	=&\dkh{uT_{1}}\bar{\mathbf{W}}_{k}^{T}\dkh{uT_{1}}^{T}-\dkh{uT_{1}}\dkh{uT_{1}}^{T}\leq 0.
	\end{align}
	By $\dkh{\ref{eqn-W}}$--\eqref{eqn-Wt}, it is easy to obtain that 
	\begin{equation}
	\label{condition1}
	\mathbf{F}_{k}+\mathbf{F}_{k}^{T}\leq 0.
	\end{equation}
	\par
	Via the jointly connectivity of the network defined in $\mathbf{A2}$,   $\frac{1}{T+1}\sum_{s=t}^{t+T}\bar{\mathbf{W}}_{s}$ and  $\frac{1}{T+1}\sum_{s=t}^{t+T}\bar{\mathbf{W}}_{s}^{T}$  are irreducible double stochastic matrix.
	Then,  for any $z\in \mathbb{R}^{N}$,  it can obtain  that 
	\begin{equation*}
	z\dkh{\frac{1}{2\dkh{T+1}}\sum_{s=t}^{t+T}\dkh{\bar{\mathbf{W}}_{s}+\bar{\mathbf{W}}_{s}^{T}}}z^{T}-zz^{T}\leq 0,
	\end{equation*}
	where the equality holds if and only if 
	$z=c\mathbf{1}$.  Since $u^{T}T_{1}\mathbf{1}=0$, 
	$u^{T}T_{1}$ can  not be expressed as $c\mathbf{1}$ for any constant $c$. Consequently,  for any nonzero $u\in \mathbb{R}^{N-1}$, the following strict inequality must hold:
	\begin{align*}
	&\dkh{u^{T}T_{1}}\dkh{\frac{1}{2\dkh{T+1}}\sum_{s=t}^{t+T}\dkh{\bar{\mathbf{W}}_{s}+\bar{\mathbf{W}}_{s}^{T}}}{\dkh{u^{T}T_{1}}}^{T}\\
	<& \dkh{u^{T}T_{1}}{\dkh{u^{T}T_{1}}}^{T}.
	\end{align*}
	Notice $T_{1}T_{1}^{T}=\mathbf{I}_{N-1}$, which implies that for any nonzero $u\in \mathbb{R}^{N-1},$
	\begin{align*}
	\frac{1}{2\dkh{T+1}}\sum_{s=t}^{t+T}u^{T}\dkh{T_{1}\dkh{\bar{\mathbf{W}}_{s}+\bar{\mathbf{W}}_{s}^{T}}T_{1}^{T}-2I_{N-1}}u<0.
	\end{align*}
	As a result, $\sum_{s=t}^{t+T}\dkh{\mathbf{F}_{s}+\mathbf{F}_{s}^{T}}<0$. In addition, with the assumption on the uniform lower bound in $\mathbf{A1}$, there is a constant $\beta>0$ such that 
	\begin{equation}\label{condition2}
	\sum_{s=t}^{t+T}\dkh{\mathbf{F}_{s}+\mathbf{F}_{s}^{T}}\leq -\beta \mathbf{I}_{N-1}.
	\end{equation}
	By \eqref{condition1} and \eqref{condition2}, we have verified the conditions on linear matrix sequence $\hkh{H_{t}}$ in Lemma \ref{fang}. 
	\par 
	Now,  we analyze the noise term $\mathbf{C}_{k}$ and  $\mathbf{M}_{k}$ in the iteration $\dkh{\ref{stochastic iteration 2}}$. 
According to Lemma \ref{proof}, $\mathbf{M}_{k}$ is a martingale difference sequence, we obtain $\sum_{k=0}^{\infty}\delta_{k}\mathbf{M}_{k}<\infty, ~ a.s.$ via the martingale convergence theorem \cite{gailvlun}. 
By assumption $\mathbf{A3}$, on the chosen scheme of correct function $\mathbf{P}_{k}$, we have that	
\begin{align}\label{zaosheng1}
&\|\mathbf{C}_{k}\|^{2}\\
=&\dkh{\mathbf{P}_{k}-\mathbf{Q}_{k}}^{T}\dkh{\dkh{\bar{\mathbf{W}}_{k}-\mathbf{B}_{k}}^{T}\otimes\mathbf{I}_{M}}\nonumber\\
&\qquad\dkh{\dkh{\bar{\mathbf{W}}_{k}-\mathbf{B}_{k}}\otimes\mathbf{I}_{M}}\dkh{\mathbf{P}_{k}-\mathbf{Q}_{k}}\nonumber\\
=&\dkh{\mathbf{P}_{k}-\mathbf{Q}_{k}}^{T}\fkh{\dkh{\dkh{\bar{\mathbf{W}}_{k}-\mathbf{B}_{k}}^{T}\dkh{\bar{\mathbf{W}}_{k}-\mathbf{B}_{k}}}\otimes\mathbf{I}_{M}}\dkh{\mathbf{P}_{k}-\mathbf{Q}_{k}}\nonumber\\
\leq&\lambda_{max}\hkh{\dkh{\bar{\mathbf{W}}_{k}-\mathbf{B}_{k}}^{T}\dkh{\bar{\mathbf{W}}_{k}-\mathbf{B}_{k}}\otimes\mathbf{I}_{M}}\|\mathbf{P}_{k}-\mathbf{Q}_{k}\|^{2}\nonumber\\
&
\xrightarrow[k\rightarrow\infty]{}0,~~a.s..\nonumber
\end{align}
Consequently, $\lim\limits_{k\rightarrow \infty}\mathbf{C}_{k}=0,~a.s..$  
	\par
	In summary, we have verified all conditions in Lemma $\ref{fang}$, then we can obtain 
	\begin{equation*}
	\lim\limits_{k\rightarrow \infty }\dkh{T{\it \Gamma}\otimes\mathbf{I}_{M}}\mathbf{Q}_{k}=\lim\limits_{k\rightarrow \infty }\fkh{\xi_{k}^{T},0}^{T} =0,~~~a.s..
	\end{equation*}
	\indent Note that $T$ is an orthogonal matrix, so ${\it\Gamma}\otimes \mathbf{I}_{M} \mathbf{Q}_{k}\xrightarrow[k\rightarrow\infty]{}0$, a.s., where 
	\begin{align*}
	&\dkh{{\it\Gamma}\otimes \mathbf{I}_{M} }\mathbf{Q}_{k}\\
	=& \fkh{\dkh{\mathbf{I}_{N}-\frac{1}{N}\mathbf{1}\mathbf{1} ^{T}}\otimes \mathbf{I}_{M}}\mathbf{Q}_{k}=\mathbf{Q}_{k}-\dkh{\frac{1}{N} \mathbf{1}\mathbf{1}^{T}\otimes \mathbf{I}_{M}}\mathbf{Q}_{k}\\
	=&\mathbf{Q}_{k}-\dkh{\mathbf{1}\otimes\mathbf{I}_{M}}\dkh{\frac{1}{N}\mathbf{1}^{T}\otimes \mathbf{I}_{M}}\mathbf{Q}_{k}\\
	=&\mathbf{Q}_{k}-\dkh{\mathbf{1}\otimes\mathbf{I}_{M}}\dkh{\sum_{j=1}^{N}\frac{1}{N}Q_{j,k}},
	\end{align*}
	$i.e.$ $Q_{i,k}-\frac{1}{N}\sum_{j=1}^{N}Q_{j,k}\xrightarrow[k\rightarrow \infty]{ } 0, $~$a.s.$ for every agent $i\in \fkh{N}$.  Pre-multiplying update $\dkh{\ref{stochastic iteration 2}}$ by $\frac{1}{N}\mathbf{1}^{T}\otimes\mathbf{I}_{M}$ 
	yields 
	\begin{align}\label{sum}
	&\dkh{\frac{1}{N}\mathbf{1}^{T}\otimes\mathbf{I}_{M}}\mathbf{Q}_{k+1}\\
	=&\dkh{\frac{1}{N}\mathbf{1}^{T}\otimes\mathbf{I}_{M}}\mathbf{Q}_{k}+ \delta_{k}\bigg\{\dkh{\frac{1}{N}\mathbf{1}^{T}\otimes\mathbf{I}_{M}}\dkh{\bar{\mathbf{W}}_{k}-\mathbf{I}_{k}}\otimes\mathbf{I}_{M}\mathbf{Q}_{k}\nonumber\\
	&+\dkh{\frac{1}{N}\mathbf{1}^{T}\otimes\mathbf{I}_{M}}\dkh{\mathbf{C}_{k}+\mathbf{M}_{k}}\bigg\}\nonumber\\
	=&\dkh{\frac{1}{N}\mathbf{1}^{T}\otimes\mathbf{I}_{M}}\mathbf{Q}_{k}+ \delta_{k}\bigg\{\dkh{\frac{1}{N}\mathbf{1}^{T}\otimes\mathbf{I}_{M}}\dkh{\mathbf{C}_{k}+\mathbf{M}_{k}}\bigg\},\nonumber
	\end{align}
	because row sum of the matrix $\bar{\mathbf{W}}_{k}-\mathbf{I}_{N}$ is \emph{zero}. Summing equation $\dkh{\ref{sum}}$ from $k=0$ to $\infty$ yields 
	\begin{align*}
	&\lim\limits_{k\rightarrow \infty}\dkh{\frac{1}{N}\mathbf{1}^{T}\otimes\mathbf{I}_{M}}\mathbf{Q}_{k+1}\\
	=&\dkh{\frac{1}{N}\mathbf{1}^{T}\otimes\mathbf{I}_{M}}\mathbf{Q}_{0}+\dkh{\frac{1}{N}\mathbf{1}^{T}\otimes\mathbf{I}_{M}}\sum_{t=0}^{\infty}\delta_{k}\dkh{\mathbf{C}_{k}+\mathbf{M}_{k}}\\
	=&\dkh{\frac{1}{N}\mathbf{1}^{T}\otimes\mathbf{I}_{M}}\mathbf{Q}_{0}\\
	&+\frac{1}{N}\sum_{t=0}^{\infty}\delta_{k}\dkh{\mathbf{1}^{T}\dkh{\mathbf{W}_{k}-\mathbf{B}_{k}}\otimes\mathbf{I}_{M}}\dkh{\mathbf{Y}_{k}-\mathbf{Q}_{k}}\\
	\triangleq&\mathbf{q}^{\ast}.
	\end{align*} 	
	Therefore,  
	$Q_{i,k}\xrightarrow[k\rightarrow\infty]{}q^{\ast},~~a.s. ,~~\forall i\in \fkh{N}$.  
	\par
	In the following,  we will verify $\mathbf{q}^{\ast}<\infty$  if   $\|\mathbf{P}_{k}-\mathbf{Q}_{k}\|=O\dkh{\delta_{k}}$.   Denote
	\begin{align}
	L_{1}\triangleq&\dkh{\frac{1}{N}\mathbf{1}^{T}\otimes\mathbf{I}_{M}}\sum_{k=1}^{\infty}\delta_{k}\mathbf{C}_{k}\nonumber\\
	=&\frac{1}{N}\sum_{k=1}^{\infty}\delta_{k}\dkh{\mathbf{1}^{T}\dkh{\bar{\mathbf{W}}_{k}-\mathbf{B}_{k}}\otimes\mathbf{I}_{M}}\dkh{\mathbf{P}_{k}-\mathbf{Q}_{k}},\\
	L_{2}\triangleq&\dkh{\frac{1}{N}\mathbf{1}^{T}\otimes\mathbf{I}_{M}}\sum_{k=1}^{\infty}\delta_{k}\mathbf{M}_{k},		
	\end{align}
	then, due to $\|\mathbf{P}_{k}-\mathbf{Q}_{k}\|=O\dkh{\delta_{k}}$ and $\mathbf{A1}$, we have
	\begin{align}
	\|L_{1}\|=&\|\frac{1}{N}\sum_{k=1}^{\infty}\delta_{k}\dkh{\mathbf{1}^{T}\dkh{\bar{\mathbf{W}}_{k}-\mathbf{B}_{k}}\otimes\mathbf{I}_{M}}\dkh{\mathbf{P}_{k}-\mathbf{Q}_{k}}\|\nonumber\\
	\leq&\frac{1}{N}\sum_{k=1}^{\infty}\delta_{k}\|\dkh{\mathbf{1}^{T}\dkh{\bar{\mathbf{W}}_{k}-\mathbf{B}_{k}}\otimes\mathbf{I}_{M}}\|\cdot\|\dkh{\mathbf{P}_{k}-\mathbf{Q}_{k}}\|\nonumber\\
	\leq & \frac{M}{N} \sum_{k=1}^{\infty}\delta_{k}^{2}<\infty.
	\end{align}
	In addition, we have known that $\hkh{\mathbf{M}_{k},\mathcal{F}_{k}}$ is a martingale difference sequence \cite{gailvlun} in Lemma \ref{proof},  thus $\sum_{k=0}^{\infty}\delta_{k}\mathbf{M}_{k}<\infty$, $i.e. $ $L_{2}<\infty$. In conclusion, $q^{\ast}<\infty$. 
\end{proof}
\begin{tuilun}(\textbf{Strong consistency})
	Let $\mathbf{B}_{k}\equiv \mathbf{I}_{N}$ and $\mathbf{A1-A4}$ hold, then   $\hkh{\mathbf{Q}_{k}}$	 generated by $\dkh{\ref{stochastic iteration 2}}$ converges to the true distribution, $i.e.$ 
	\begin{align*}
	&\lim\limits_{k\rightarrow \infty }\dkh{Q_{i,k}-Q_{j,k}}=0, ~~\forall i,j \in \fkh{N}, ~a.s.,	\\
	&\lim\limits_{k \rightarrow \infty}Q_{i,k}= q^{\ast}\triangleq\dkh{\frac{1}{N}\mathbf{1}^{T}\otimes \mathbf{I}_{M}}\mathbf{Q}_{0}, ~~~a.s.. 
	\end{align*}
\end{tuilun}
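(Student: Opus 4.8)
The plan is to derive this corollary directly from Theorem \ref{theorem} by specializing the closed form of the limit $q^{\ast}$ to the case $\mathbf{B}_{k}\equiv\mathbf{I}_{N}$; essentially all of the work has already been done, and the only genuine content is the evaluation of the common limit. First I would check that the hypotheses are mutually consistent: assumption $\mathbf{A4}$ demands $\mathbf{A}_{k}+\mathbf{B}_{k}=\mathbf{I}_{N}$, so taking $\mathbf{B}_{k}\equiv\mathbf{I}_{N}$ forces $\mathbf{A}_{k}\equiv\mathbf{0}$ (that is, $a_{ii}^{k}=0$ and $b_{ii}^{k}=1$), which is admissible since $a_{ii}^{k},b_{ii}^{k}\geq 0$. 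With $\mathbf{A1}$--$\mathbf{A4}$ in force, Theorem \ref{theorem} applies verbatim and already delivers the consensus statement $Q_{i,k}-\frac{1}{N}\sum_{j=1}^{N}Q_{j,k}\to 0$ a.s., whence $\lim_{k\to\infty}(Q_{i,k}-Q_{j,k})=0$ a.s. for all $i,j\in\fkh{N}$. Thus it remains only to identify the shared limit.

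The key observation is that the accumulated quantized-error series in the formula for $q^{\ast}$ collapses to zero under column-stochasticity. By $\mathbf{A2}(i)$ each $\mathbf{W}_{k}$ is doubly stochastic, so $\mathbf{1}^{T}\mathbf{W}_{k}=\mathbf{1}^{T}$; combined with $\mathbf{B}_{k}=\mathbf{I}_{N}$ this yields $\mathbf{1}^{T}(\mathbf{W}_{k}-\mathbf{B}_{k})=\mathbf{1}^{T}\mathbf{W}_{k}-\mathbf{1}^{T}=\mathbf{0}^{T}$ for every $k$. Consequently each summand $\delta_{k}\big(\mathbf{1}^{T}(\mathbf{W}_{k}-\mathbf{B}_{k})\otimes\mathbf{I}_{M}\big)(\mathbf{Y}_{k}-\mathbf{Q}_{k})$ vanishes identically, so the entire series $\frac{1}{N}\sum_{k=0}^{\infty}\delta_{k}(\cdots)$ is exactly zero and no summability estimate is needed. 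Therefore $q^{\ast}=\big(\frac{1}{N}\mathbf{1}^{T}\otimes\mathbf{I}_{M}\big)\mathbf{Q}_{0}$. Since $Q_{i,0}=\mathbf{e}_{X_{i}}$, this limit equals $\frac{1}{N}\sum_{i=1}^{N}\mathbf{e}_{X_{i}}$, which is precisely the empirical distribution ${\it\Pi}$ of \eqref{true distribution}; combining with the consensus conclusion gives $Q_{i,k}\to\big(\frac{1}{N}\mathbf{1}^{T}\otimes\mathbf{I}_{M}\big)\mathbf{Q}_{0}={\it\Pi}$ a.s. for every $i$, i.e. strong consistency.

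As a robustness check I would also record the equivalent invariance-of-average viewpoint read off directly from \eqref{sum}: because $\bar{\mathbf{W}}_{k}=E[\mathbf{W}_{k}]$ inherits double stochasticity, both $\mathbf{1}^{T}(\bar{\mathbf{W}}_{k}-\mathbf{I}_{N})=\mathbf{0}$ and $\mathbf{1}^{T}(\mathbf{W}_{k}-\bar{\mathbf{W}}_{k})=\mathbf{0}$ hold, so $\big(\frac{1}{N}\mathbf{1}^{T}\otimes\mathbf{I}_{M}\big)(\mathbf{C}_{k}+\mathbf{M}_{k})=0$ and hence $\big(\frac{1}{N}\mathbf{1}^{T}\otimes\mathbf{I}_{M}\big)\mathbf{Q}_{k+1}=\big(\frac{1}{N}\mathbf{1}^{T}\otimes\mathbf{I}_{M}\big)\mathbf{Q}_{k}$ for all $k$; the network average is thus conserved at its initial value for every realization. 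There is no hard step here, as everything is inherited from Theorem \ref{theorem}; the single point requiring care is to recognize that it is the column-stochasticity $\mathbf{1}^{T}\mathbf{W}_{k}=\mathbf{1}^{T}$ (rather than the row-stochasticity used for the consensus/disagreement direction in the Theorem) that annihilates the quantized-error accumulation once $\mathbf{B}_{k}=\mathbf{I}_{N}$.
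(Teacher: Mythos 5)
Your proposal is correct and follows exactly the route the paper intends (the paper states the corollary without a printed proof, but it is the evident specialization of Theorem \ref{theorem}): with $\mathbf{B}_{k}\equiv\mathbf{I}_{N}$ and $\mathbf{W}_{k}$ column-stochastic, $\mathbf{1}^{T}\dkh{\mathbf{W}_{k}-\mathbf{B}_{k}}=\mathbf{0}^{T}$, so the accumulated quantized-error series in $q^{\ast}$ vanishes term by term and the limit reduces to $\dkh{\frac{1}{N}\mathbf{1}^{T}\otimes\mathbf{I}_{M}}\mathbf{Q}_{0}$. Your additional observations (that $\mathbf{A4}$ forces $\mathbf{A}_{k}\equiv\mathbf{0}$, and that the network average is exactly conserved along every realization) are correct and consistent with the paper's setup.
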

\begin{zhushi}
	{\rm
		Compared with the undirected graph in \cite{6827923}, the joint connectivity of directed graph pointed at $\mathbf{A2}$ in  Theorem \ref{theorem} is a weaker condition. Besides, we have derived strong consensus to a finite limit $q^{\ast}$, which is almost identical with the true distribution ${\it \Pi}$.
		
	}	
\end{zhushi}
\section{Asymptotic Normality}\label{disijie}
In this section, we will establish asymptotic normality for estimate error $\mathbf{Q}_{k}-\mathbf{Q}^{\ast}$ of the distributed social sampling algorithm. The main tool for asymptotic  normality  analysis is shown in the following lemma.
\begin{yinli}[Theorem 3.3.1 in \cite{Chen1979Stochastic}]\label{chen2002}
	Let $H_{k}$ and $H$ be $l\times l$-matrices, $\hkh{x_{k}}$ be given by $x_{k+1}=x_{k}+\delta_{k}\dkh{H_{k}x_{k}+e_{k+1}+v_{k+1}}$ with an arbitrarily given initial value. Assume that the step-size $\delta_{k}$ satisfies $\mathbf{A1}$ and the following conditions hold
	\begin{itemize}
		\item[C1] $H_{k}\xrightarrow[k\rightarrow\infty]{}H$ and $H+\frac{\delta}{2}\mathbf{I}$ is stable with the constant $\delta$ given in $\mathbf{A1}$;
		\item[C2] $v_{k}=o\dkh{\sqrt{\delta_{k}}}$; 
		\item[C3]  $\hkh{e_{k},\mathcal{F}_{k}}$ is a martingale difference sequence of $l$-dimension which satisfies 
		\begin{align*}
		&E\dkh{e_{k}|\mathcal{F}_{k-1}}=0\nonumber \text{ and }\sup_{k}E\dkh{\|e_{k}\|^{2}|\mathcal{F}_{k-1}}\leq\sigma\nonumber\\
		& \text{with } \sigma \text{ being  a  constant},\\
		&\lim\limits_{k\rightarrow \infty}E\dkh{e_{k}e_{k}^{T}|\mathcal{F}_{k-1}}=\lim\limits_{k\rightarrow \infty}E e_{k}e_{k}^{T}\triangleq S_{0},~~a.s. ,\\
		&\lim\limits_{N\rightarrow \infty}\sup_{k}	E\|e_{k}\|^{2}\mathbf{I}_{\fkh{\|e_{k}\|>N}}=0,
		\end{align*}	
	\end{itemize}	
	then 	$\frac{x_{k}}{\sqrt{\delta_{k}}}\xrightarrow[k\rightarrow\infty]{d} N\dkh{0,S},$
	where 
	\begin{align*}
S=\int_{0}^{\infty} e^{\dkh{H+\frac{\delta}{2}\mathbf{I}}t}S_{0}e^{\dkh{H^{T}+\frac{\delta}{2}\mathbf{I}}t} dt.
	\end{align*}
\end{yinli}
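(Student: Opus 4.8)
The plan is to transform the recursion into a normalized one whose limiting system matrix is exactly $\tilde{H}\triangleq H+\frac{\delta}{2}\mathbf{I}$, and then to read off the Gaussian limit from a central limit theorem for martingale differences. First I would establish the mean-square rate $E\|x_{k}\|^{2}=O\dkh{\delta_{k}}$; this is the key preliminary estimate, since it guarantees that the normalized sequence $u_{k}\triangleq x_{k}/\sqrt{\delta_{k}}$ is $L^{2}$-bounded and hence tight. This estimate follows by expanding $E\|x_{k+1}\|^{2}$ and invoking the stability in C1 (so that $\mathbf{I}+\delta_{k}H_{k}$ is contractive in a suitable norm for large $k$), the martingale-difference and uniform second-moment properties of $e_{k+1}$ in C3, the bound $v_{k+1}=o\dkh{\sqrt{\delta_{k}}}$ in C2, together with $\sum_{k}\delta_{k}^{2}<\infty$ from $\mathbf{A1}$.

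Next I would carry out the change of variables $u_{k}=x_{k}/\sqrt{\delta_{k}}$. Using $\frac{1}{\delta_{k+1}}-\frac{1}{\delta_{k}}\to\delta$ from $\mathbf{A1}$ gives $\delta_{k}/\delta_{k+1}=1+\delta\,\delta_{k}+o\dkh{\delta_{k}}$ and therefore $\sqrt{\delta_{k}/\delta_{k+1}}=1+\frac{\delta}{2}\delta_{k}+o\dkh{\delta_{k}}$. Substituting $x_{k}=\sqrt{\delta_{k}}\,u_{k}$ into the recursion yields
\begin{align*}
u_{k+1}=u_{k}+\delta_{k}\dkh{H_{k}+\tfrac{\delta}{2}\mathbf{I}}u_{k}+\sqrt{\delta_{k}}\,\tilde{e}_{k+1}+\tilde{v}_{k+1},
\end{align*}
where $\tilde{e}_{k+1}=\sqrt{\delta_{k}/\delta_{k+1}}\,e_{k+1}$ is still an $\mathcal{F}_{k+1}$-martingale difference whose conditional covariance tends to $S_{0}$ by C3, and $\tilde{v}_{k+1}=o\dkh{\sqrt{\delta_{k}}}$ gathers the contribution of $v_{k+1}$ and the $o\dkh{\delta_{k}}$ remainders (the latter harmless because $u_{k}$ is $L^{2}$-bounded). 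By C1 the effective matrix $H_{k}+\frac{\delta}{2}\mathbf{I}$ converges to the stable matrix $\tilde{H}$.

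Then I would iterate this linear recursion to write $u_{k}$ as a weighted sum of past noise, with transition matrices $\Phi_{k,j}=\prod_{i=j+1}^{k-1}\dkh{\mathbf{I}+\delta_{i}\dkh{H_{i}+\frac{\delta}{2}\mathbf{I}}}$. Introducing the rescaled time $t_{k}=\sum_{i=0}^{k-1}\delta_{i}$, stability of $\tilde{H}$ yields the approximation $\Phi_{k,j}\approx e^{\tilde{H}\dkh{t_{k}-t_{j}}}$ with norm decaying exponentially in $t_{k}-t_{j}$; this both annihilates the initial condition and renders the $\tilde{v}$ forcing negligible, leaving the martingale part dominant. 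Applying a central limit theorem for martingale-difference arrays --- whose Lindeberg condition is exactly the uniform-integrability requirement $\lim_{N}\sup_{k}E\|e_{k}\|^{2}\mathbf{I}_{\fkh{\|e_{k}\|>N}}=0$ and whose conditional-covariance requirement is $E\dkh{e_{k}e_{k}^{T}\mid\mathcal{F}_{k-1}}\to S_{0}$, both in C3 --- gives that $u_{k}=x_{k}/\sqrt{\delta_{k}}$ is asymptotically $N\dkh{0,S}$ with
\begin{align*}
S=\int_{0}^{\infty}e^{\tilde{H}t}S_{0}e^{\tilde{H}^{T}t}\,dt,
\end{align*}
equivalently the unique solution of the Lyapunov equation $\tilde{H}S+S\tilde{H}^{T}=-S_{0}$, which is finite precisely because $\tilde{H}$ is stable.

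The step I expect to be the main obstacle is this last one: replacing the time-varying products $\Phi_{k,j}$ by the exponentials $e^{\tilde{H}\dkh{t_{k}-t_{j}}}$ uniformly enough that the discrete covariance sum $\sum_{j}\delta_{j}\Phi_{k,j}S_{0}\Phi_{k,j}^{T}$ converges to the integral $S$, while simultaneously verifying the hypotheses of the martingale CLT for the normalized array. Controlling the accumulation of the $o\dkh{\delta_{k}}$ errors inside the transition matrices, and showing they do not perturb the limiting covariance, is the delicate part; the stability margin of $\tilde{H}$ and the summability $\sum_{k}\delta_{k}^{2}<\infty$ are the levers that make this control possible.
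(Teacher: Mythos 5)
The paper offers no proof of this lemma---it is imported verbatim as Theorem 3.3.1 of Chen's monograph on stochastic approximation---so there is no internal argument to compare against; your sketch reproduces the standard proof from that source (the preliminary rate $E\|x_{k}\|^{2}=O\dkh{\delta_{k}}$ via a Lyapunov norm, the normalization $u_{k}=x_{k}/\sqrt{\delta_{k}}$ which shifts the system matrix to $H+\frac{\delta}{2}\mathbf{I}$ using $\frac{1}{\delta_{k+1}}-\frac{1}{\delta_{k}}\to\delta$, the exponential approximation $\Phi_{k,j}\approx e^{\tilde{H}\dkh{t_{k}-t_{j}}}$, and a Lindeberg-type martingale CLT) and is correct in outline. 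You have also correctly identified the genuinely delicate step---passing from the discrete covariance sum to the Lyapunov integral $S$ while controlling the accumulated $o\dkh{\delta_{k}}$ perturbations---and the stability margin of $H+\frac{\delta}{2}\mathbf{I}$ together with $\sum_{k}\delta_{k}^{2}<\infty$ are exactly the tools that close it.
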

For the case where the root set of the observation function $f\dkh{x}=\bar{\mathbf{H}}_{k}\dkh{x}$ consists of a singleton $zero$, we consider $\hkh{\xi_{k}}$ which is defined in $\dkh{\ref{jiangwei}}$. It has been verified  that $\lim\limits_{k\rightarrow \infty }\xi_{k}=0, ~a.s.$. 
Rewrite \eqref{jiangwei} as
\begin{align}\label{zhengtai}
\xi_{k+1}=&\xi_{k}+ \delta_{k}\{\dkh{T_{1}\otimes \mathbf{I}_{M}}\bar{\mathbf{H}}_{k}\dkh{T_{1}^{T}\otimes\mathbf{I}_{M}}\xi_{k}\nonumber\\
	&+\dkh{T_{1}\otimes\mathbf{I}_{M}}\dkh{\mathbf{C}_{k}+\mathbf{M}_{k}}\},
\end{align} 
where $\bar{\mathbf{H}}_{k}$, $\mathbf{C}_{k}$ and $\mathbf{M}_{k}$ are given by $\dkh{\ref{yangchalie}}$. It can be seen that $\xi_{k}$  is updated by a linear stochastic approximation algorithm to approach the sought root $zero$. Furthermore, we can investigate the asymptotic properties of $\dkh{\ref{zhengtai}}$.\\
\indent Before describing the convergent rate of iteration $\dkh{\ref{zhengtai}}$, we require the following assumptions.  We keep $\mathbf{A1}$ unchanged, but strengthen $\mathbf{A3}$ to $\mathbf{A3^{'}}$ and change $\mathbf{A2}$ to $\mathbf{A2^{'}}$ as follows. 
\begin{description}
	\item[$\mathbf{A2^{'}}$]  
	\begin{enumerate}
		\item[$\dkh{i}$] $\hkh{\mathbf{W}_{k}}$ is an $i.i.d.$ sequence.
		\item[$\dkh{ii}$] The mean graph $\bar{\mathcal{G}}_{k}=\dkh{\mathcal{V},\mathcal{E}_{k},\bar{\mathbf{W}}}$ generated by $\bar{\mathbf{W}}=E\fkh{\mathbf{W}_{k}}$ is strongly connected and the adjacency matrix  $\mathbf{W}_{k}$ is double stochastic.
	\end{enumerate}
	\item[$\mathbf{A3^{'}}$] $\|\mathbf{P}_{k}-\mathbf{Q}_{k}\|=o\dkh{\delta_{k}}$.	
	\item[$\mathbf{A4^{'}}$] Choose $\mathbf{B}_{k}\equiv\mathbf{B}$, where $\mathbf{B}$ is a constant matrix.	
	\item[$\mathbf{A5}$] For sampling $\mathbf{Y}_k$ under distribution $\mathbf{P}_k$, we assume that $\varSigma$  is a constant matrix almost surely, where $\varSigma\triangleq \lim\limits_{k\rightarrow\infty}E\fkh{\dkh{\mathbf{Y}_{k}-\mathbf{P}_{k}}\dkh{\mathbf{Y}_{k}-\mathbf{P}_{k}}^{T}|\mathcal{F}_{k-1}}$.
\end{description}

\begin{zhushi}
	{\rm We consider an example of $\mathbf{A5}$ for illustration.  Recall that the random message $\mathbf{Y}_{k}\triangleq \dkh{Y_{1,k}^{T},Y_{2,k}^{T},\cdots,Y_{N,k}^{T}}^{T}$ is generated from the sampling function $\mathbf{P}_{k}\triangleq \dkh{P_{1,k}^{T},P_{2,k}^{T},\cdots,P_{N,k}^{T}}^{T}$. Let $Y_{i,k}=\mathbf{e}_{m}$
		with probability $P_{i,k}^{m}$, where $m\in\fkh{M}$. Then 
		\begin{align}
		&\varSigma_{k}\triangleq E\fkh{\dkh{\mathbf{Y}_{k}-\mathbf{P}_{k}}\dkh{\mathbf{Y}_{k}-\mathbf{P}_{k}}^{T}|\mathcal{F}_{k-1}}\nonumber\\
		=&
		\begin{pmatrix}
		\varSigma_{1,k}& 0&\cdots&0\\
		0&\varSigma_{2,k}&\cdots&0\\
		\vdots & \vdots & \ddots &\vdots \\
		0&0&\cdots&\varSigma_{N,k}\\
		\end{pmatrix},
		\end{align} 		
		where
		\begin{align*}
		&\varSigma_{i,k}=E\fkh{\dkh{Y_{i,k}-P_{i,k}}\dkh{Y_{i,k}-P_{i,k}}^{T}|\mathcal{F}_{k-1}}\\
		=&
		\begin{pmatrix}
		\dkh{1-P_{i,k}^{1}}P_{i,k}^{1}&\cdots&0\\
		\vdots&\ddots&\vdots\\
		0&\cdots&\dkh{1-P_{i,k}^{M}}P_{i,k}^{M}
		\end{pmatrix}.
		\end{align*}
		From $\mathbf{P}_{k}-\mathbf{Q}_{k}\xrightarrow[k\rightarrow\infty]{} 0~a.s.$, $\mathbf{Q}_{k}-\mathbf{Q}^{\ast}\xrightarrow[k\rightarrow\infty]{} 0~a.s.$, we have 
		\begin{align*}
		\varSigma_{k}\xrightarrow[k\rightarrow\infty]{}\varSigma=\hkh{\dkh{\mathbf{I}_{M}-diag\dkh{q^{\ast}}}diag\dkh{q^{\ast}}}\otimes\mathbf{I}_{N}.
		\end{align*}
	}	
\end{zhushi}
\indent The following lemma considers the martingale difference sequence part. 
\begin{yinli}\label{yangchana}
	Under $\mathbf{A2^{'}}$ and $\mathbf{A4^{'}}$, by choosing  $\mathbf{B}\equiv\mathbf{I}_{N}$, we have
	\begin{align}\label{yangchazaosheng}
	&\varepsilon_{k}\triangleq\dkh{T_{1}\otimes\mathbf{I}_{M}}\mathbf{M}_{k}\nonumber\\
	=&\dkh{T_{1}\dkh{\mathbf{W}_{k}-\mathbf{B}}\otimes\mathbf{I}_{M}}\dkh{\mathbf{Y}_{k}-\mathbf{P}_{k}}\nonumber\\
	&+\dkh{T_{1}\dkh{\mathbf{W}_{k}-\bar{\mathbf{W}}}\otimes\mathbf{I}_{M}}\mathbf{P}_{k}
	\end{align}		
	is a martingale difference sequence satisfying
	\begin{align}
	&E\dkh{\varepsilon_{k}|\mathcal{F}_{k-1}}=0,\nonumber ~ \\ &\sup_{k}E\dkh{\|\varepsilon_{k}\|^{2}|\mathcal{F}_{k-1}}\leq\sigma~\text{with $\sigma$  being  a  constant,}\label{yangchayoujie}\\
	&\lim\limits_{N\rightarrow \infty}\sup_{k}	E\|\varepsilon_{k}\|^{2}\mathbf{I}_{\fkh{\|\varepsilon_{k}\|>N}}=0\label{jixian}.
	\end{align}
\end{yinli}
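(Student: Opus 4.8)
The plan is to reduce the three assertions to two ingredients that are essentially already available: the martingale difference lemma (Lemma \ref{proof}) applied to $\mathbf{M}_{k}$, and a single deterministic, $k$-independent bound on $\|\varepsilon_{k}\|$. Before anything else I would record the explicit form of $\varepsilon_{k}$. Under $\mathbf{A2^{'}}$ the sequence $\hkh{\mathbf{W}_{k}}$ is $i.i.d.$, so its mean is time-invariant, $\bar{\mathbf{W}}_{k}\equiv\bar{\mathbf{W}}$; combined with $\mathbf{A4^{'}}$ and the choice $\mathbf{B}\equiv\mathbf{I}_{N}$, the definition \eqref{yangchalie} of $\mathbf{M}_{k}$ specializes to $\mathbf{M}_{k}=\dkh{\dkh{\mathbf{W}_{k}-\mathbf{I}_{N}}\otimes\mathbf{I}_{M}}\dkh{\mathbf{Y}_{k}-\mathbf{P}_{k}}+\dkh{\dkh{\mathbf{W}_{k}-\bar{\mathbf{W}}}\otimes\mathbf{I}_{M}}\mathbf{P}_{k}$. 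Left-multiplying by $T_{1}\otimes\mathbf{I}_{M}$ and using the mixed-product property $\dkh{A\otimes B}\dkh{C\otimes D}=AC\otimes BD$ reproduces exactly the two-term expression for $\varepsilon_{k}$ in the statement.

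For the martingale difference property I would first note that $\mathbf{A2^{'}}$ implies $\mathbf{A2}$: an $i.i.d.$ sequence is independent, a time-invariant strongly connected mean graph is in particular jointly connected, and the finitely many nonzero entries of the constant mean matrix admit a positive uniform lower bound $\tau$. Hence Lemma \ref{proof} applies and $\dkh{\mathbf{M}_{k},\mathcal{F}_{k}}$ is a martingale difference sequence. Since $T_{1}\otimes\mathbf{I}_{M}$ is a fixed deterministic matrix, $\varepsilon_{k}=\dkh{T_{1}\otimes\mathbf{I}_{M}}\mathbf{M}_{k}$ is $\mathcal{F}_{k}$-measurable and $E\dkh{\varepsilon_{k}\mid\mathcal{F}_{k-1}}=\dkh{T_{1}\otimes\mathbf{I}_{M}}E\dkh{\mathbf{M}_{k}\mid\mathcal{F}_{k-1}}=0$, which is the first claim.

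The core of the argument is the uniform bound $\|\varepsilon_{k}\|\le C$ with $C$ deterministic and independent of $k$. I would estimate each factor in the two-term expression separately: $T_{1}$ has orthonormal rows, so $\|T_{1}\|\le1$; every $\mathbf{W}_{k}$ and its mean $\bar{\mathbf{W}}$ are doubly stochastic, hence by Birkhoff's theorem each is a convex combination of permutation matrices, and convexity of the spectral norm gives $\|\mathbf{W}_{k}\|_{2}\le1$, so that $\|\mathbf{W}_{k}-\mathbf{I}_{N}\|\le2$ and $\|\mathbf{W}_{k}-\bar{\mathbf{W}}\|\le2$; the message vector $\mathbf{Y}_{k}$ stacks $N$ standard basis vectors and $\mathbf{P}_{k}$ stacks $N$ probability vectors, so both have norm at most $\sqrt{N}$. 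Using submultiplicativity, the identity $\|A\otimes B\|_{2}=\|A\|_{2}\|B\|_{2}$, and the triangle inequality then produces a deterministic constant $C$, independent of $k$ and of $\omega$, with $\|\varepsilon_{k}\|\le C$. This instantly yields $\sup_{k}E\dkh{\|\varepsilon_{k}\|^{2}\mid\mathcal{F}_{k-1}}\le C^{2}\triangleq\sigma$, which is \eqref{yangchayoujie}, and it gives \eqref{jixian} with no further probabilistic input: once the truncation level in \eqref{jixian} exceeds $C$ the corresponding indicator vanishes identically and uniformly in $k$, so the supremum over $k$ is $0$ and the limit is $0$. I expect the only genuinely delicate step to be the operator-norm estimate $\|\mathbf{W}_{k}\|_{2}\le1$ for doubly stochastic matrices; everything else is bookkeeping, and once the deterministic boundedness of $\varepsilon_{k}$ is in hand the Lindeberg-type condition \eqref{jixian} is immediate rather than requiring any limiting argument.
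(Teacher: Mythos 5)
Your proposal is correct and takes essentially the same route as the paper's Appendix~B: the martingale difference property is inherited from Lemma~\ref{proof} because $T_{1}\otimes\mathbf{I}_{M}$ is a constant matrix, and a deterministic, $k$-independent bound on $\|\varepsilon_{k}\|$ (from the double stochasticity of $\mathbf{W}_{k}$, $\bar{\mathbf{W}}$ and the boundedness of $\mathbf{Y}_{k}$, $\mathbf{P}_{k}$) immediately yields both \eqref{yangchayoujie} and \eqref{jixian}. Your write-up is in fact more explicit than the paper's (e.g.\ the Birkhoff argument for $\|\mathbf{W}_{k}\|_{2}\le 1$ and the observation that the indicator in \eqref{jixian} vanishes once the truncation level exceeds the bound), but the underlying argument is identical.
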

\begin{proof}
	See the proof in Appendix B. 
\end{proof}
Now we can establish the asymptotic normality of the  distributed social sampling algorithm   $\dkh{\ref{zhengtai}}$.
\begin{dingli}[\textbf{Asymptotic normality}]\label{zhengtaijieguo}
	Let $\mathbf{A1}$, $\mathbf{A2^{'}}$, $\mathbf{A3^{'}}$, $\mathbf{A4^{'}}$ and $\mathbf{A5}$ hold, then  $\xi_{k}= \dkh{T_{1}\otimes \mathbf{I}_{M}} \mathbf{Q}_{k}$ is asymptotically normal,  $i.e.$ the distribution of $\frac{1}{\sqrt{\delta_{k}}}\xi_{k}$ converge to a normal distribution:
	\begin{equation}
	\frac{\xi_{k}}{\sqrt{\delta_{k}}}\xrightarrow[k\rightarrow\infty]{d} N\dkh{\mathbf{0},\mathbf{S}},
	\end{equation}
	where
	\begin{align*}
	&\mathbf{S}=\int_{0}^{\infty}e^{\dkh{\bar{\mathbf{F}}+\frac{\delta}{2}\mathbf{I}_{\dkh{N-1}M}}t}\mathbf{S}_{0}e^{\dkh{\bar{\mathbf{F}}+\frac{\delta}{2}\mathbf{I}_{\dkh{N-1}M}}^{T}t}dt,\\
	&\mathbf{S}_{0}\triangleq E\fkh{\dkh{T_{1}\dkh{\mathbf{W}_{k}-\mathbf{I}_{N}}\otimes\mathbf{I}_{M}}}\varSigma E\fkh{\dkh{\mathbf{W}_{k}-\mathbf{I}_{N}}^{T}T_{1}^{T}\otimes\mathbf{I}_{M}},\\
	&\bar{\mathbf{F}}\triangleq\dkh{T_{1}\dkh{\bar{\mathbf{W}}-\mathbf{I}_{N}}T_{1}^{T}}\otimes\mathbf{I}_{M}, \\
	&\varSigma\triangleq \lim\limits_{k\rightarrow\infty}E\fkh{\dkh{\mathbf{Y}_{k}-\mathbf{P}_{k}}\dkh{\mathbf{Y}_{k}-\mathbf{P}_{k}}^{T}|\mathcal{F}_{k-1}}.
	\end{align*}
\end{dingli}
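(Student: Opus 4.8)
The plan is to cast the recursion $\eqref{zhengtai}$ into the canonical linear stochastic-approximation form of Lemma \ref{chen2002} and then verify its three hypotheses \textbf{C1}--\textbf{C3}. Reading off the coefficients, I set $x_{k}=\xi_{k}$, the gain matrix $H_{k}=\dkh{T_{1}\otimes\mathbf{I}_{M}}\bar{\mathbf{H}}_{k}\dkh{T_{1}^{T}\otimes\mathbf{I}_{M}}=\dkh{T_{1}\dkh{\bar{\mathbf{W}}-\mathbf{I}_{N}}T_{1}^{T}}\otimes\mathbf{I}_{M}=\bar{\mathbf{F}}$, the martingale term $e_{k+1}=\varepsilon_{k}=\dkh{T_{1}\otimes\mathbf{I}_{M}}\mathbf{M}_{k}$, and the perturbation $v_{k+1}=\dkh{T_{1}\otimes\mathbf{I}_{M}}\mathbf{C}_{k}$, where $\bar{\mathbf{H}}_{k},\mathbf{C}_{k},\mathbf{M}_{k}$ are those of $\eqref{yangchalie}$. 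Under $\mathbf{A2^{'}}$ the sequence $\hkh{\mathbf{W}_{k}}$ is $i.i.d.$, so $\bar{\mathbf{W}}_{k}\equiv\bar{\mathbf{W}}$ and $H_{k}\equiv\bar{\mathbf{F}}$ is already constant; the convergence part of \textbf{C1} is then immediate with $H=\bar{\mathbf{F}}$.

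For the stability part of \textbf{C1} I would reuse the spectral estimates established in the proof of Theorem \ref{theorem}. Strong connectivity of the mean graph under $\mathbf{A2^{'}}$ gives $\bar{\mathbf{F}}+\bar{\mathbf{F}}^{T}<0$ on $\mathbb{R}^{\dkh{N-1}M}$ (the consensus direction $\mathbf{1}$ being annihilated by $T_{1}$), so $\bar{\mathbf{F}}$ is Hurwitz; choosing the step size so that the constant $\delta$ of $\mathbf{A1}$ is smaller than twice the spectral abscissa bound of $\bar{\mathbf{F}}$ then makes $\bar{\mathbf{F}}+\frac{\delta}{2}\mathbf{I}_{\dkh{N-1}M}$ stable. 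Condition \textbf{C2} is quick: by $\eqref{yangchalie}$ and the bound $\eqref{zaosheng1}$ we have $\|v_{k+1}\|\le\|\mathbf{C}_{k}\|\le c\,\|\mathbf{P}_{k}-\mathbf{Q}_{k}\|$ for a constant $c$, which by $\mathbf{A3^{'}}$ is $o\dkh{\delta_{k}}$, and since $\delta_{k}\to 0$ this is $o\dkh{\sqrt{\delta_{k}}}$.

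The real work is \textbf{C3}. Lemma \ref{yangchana} already supplies the martingale-difference property, the uniform bound on the conditional second moment, and the Lindeberg condition, so the only thing left is to identify the a.s. limit $\mathbf{S}_{0}$ of $E\dkh{\varepsilon_{k}\varepsilon_{k}^{T}\mid\mathcal{F}_{k-1}}$. I would split $\varepsilon_{k}$ as in $\eqref{yangchazaosheng}$ into the sampling-noise term $\dkh{T_{1}\dkh{\mathbf{W}_{k}-\mathbf{I}_{N}}\otimes\mathbf{I}_{M}}\dkh{\mathbf{Y}_{k}-\mathbf{P}_{k}}$ and the network-noise term $\dkh{T_{1}\dkh{\mathbf{W}_{k}-\bar{\mathbf{W}}}\otimes\mathbf{I}_{M}}\mathbf{P}_{k}$. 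Because $\mathbf{W}_{k}$ is row-stochastic, $\dkh{\mathbf{W}_{k}-\bar{\mathbf{W}}}\mathbf{1}=\mathbf{0}$, and the consensus result $\mathbf{P}_{k}\to\mathbf{1}\otimes q^{\ast}$ then forces the network-noise term, together with its cross term against the sampling term, to vanish a.s. in the limit. For the surviving sampling term I would condition first on $\dkh{\mathcal{F}_{k-1},\mathbf{W}_{k}}$, use the independence of $\mathbf{W}_{k}$ from $\mathcal{F}_{k-1}$ and of the sampling from $\mathbf{W}_{k}$, and invoke $\mathbf{A5}$ (i.e. $\varSigma_{k}\to\varSigma$ a.s. with $\varSigma$ constant) to obtain the claimed $\mathbf{S}_{0}$. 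Feeding $H=\bar{\mathbf{F}}$ and this $\mathbf{S}_{0}$ into Lemma \ref{chen2002} then yields $\xi_{k}/\sqrt{\delta_{k}}\xrightarrow[k\rightarrow\infty]{d}N\dkh{\mathbf{0},\mathbf{S}}$ with the stated integral covariance.

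I expect the main obstacle to be the a.s. identification of $\mathbf{S}_{0}$: one must justify interchanging the limit with the conditional expectation, control the vanishing of the network-noise contribution uniformly enough over $k$, and track the $\mathbf{W}_{k}$-dependence through the bilinear form carefully so that the limiting matrix emerges in exactly the stated form. The remaining conditions lean directly on results already in hand (the stability analysis from Theorem \ref{theorem}, the order estimate from $\mathbf{A3^{'}}$, and the moment and Lindeberg bounds from Lemma \ref{yangchana}), so they should be routine once the covariance computation is pinned down.
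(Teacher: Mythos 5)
Your proposal follows essentially the same route as the paper's proof: it invokes Lemma \ref{chen2002} with $H=\bar{\mathbf{F}}$, verifies C1 via the spectral gap of $\bar{\mathbf{W}}$ restricted to the range of $T_{1}$ and the choice of $\delta$, disposes of C2 with $\mathbf{A3^{'}}$ and \eqref{zaosheng1}, and for C3 performs the same decomposition of $\varepsilon_{k}\varepsilon_{k}^{T}$, kills the network-noise and cross terms using $\dkh{\mathbf{W}_{k}-\bar{\mathbf{W}}}\mathbf{1}=\mathbf{0}$ together with the a.s.\ convergence of $\mathbf{P}_{k}$, and identifies $\mathbf{S}_{0}$ by conditioning on $\dkh{\mathcal{F}_{k-1},\mathbf{W}_{k}}$ with $\mathbf{A5}$. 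The approach and all key steps match the paper's argument.
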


\begin{proof}
	To use Lemma $\ref{chen2002}$,  we have to validate conditions C1 and C2.  
	\par
	First,  we consider C1. Denote $\bar{\mathbf{F}}_{k}\triangleq \dkh{T_{1}\otimes \mathbf{I}_{M}}\bar{\mathbf{H}}_{k}\dkh{T_{1}^{T}\otimes \mathbf{I}_{M}}$, by $\mathbf{A2^{'}}$ on the weighted matrix $\mathbf{W}_{k}$, we have
	\begin{align*}
	&\bar{\mathbf{F}}+\frac{\delta}{2}\mathbf{I}_{\dkh{N-1}M}\\
	=&\dkh{T_{1}\otimes\mathbf{I}_{M}}\dkh{\dkh{\bar{\mathbf{W}}-\mathbf{I}_{N}}\otimes\mathbf{I}_{M}}\dkh{T_{1}^{T}\otimes\mathbf{I}_{M}}+\frac{\delta}{2}\mathbf{I}_{\dkh{N-1}M}\\
	=&\dkh{T_{1}\dkh{\bar{\mathbf{W}}-\mathbf{I}_{N}}T_{1}^{T}}\otimes\mathbf{I}_{M}+\frac{\delta}{2}\mathbf{I}_{\dkh{N-1}M}.
	\end{align*}	
	Under  assumption $\mathbf{A2^{'}}$, $\bar{\mathbf{W}}=E\fkh{\mathbf{W}_{k}}$ is a stochastic matrix and its largest eigenvalue is 1 via Perron's Theorem \cite{xiandai}. Let  the second largest eigenvalue of $\bar{\mathbf{W}}$ be  $\lambda_{2}$. We can choose step-size $\delta_{k}$ properly such that $\lambda_{2}<1-\frac{\delta}{2}$, where the linear matrix in $\dkh{\ref{zhengtai}}$ satisfies the stable  assumption in C1.\\	
	\indent Now we analyze C2 item by item. First $v_{k}\triangleq\dkh{T_{1}\otimes\mathbf{I}_{M}}\mathbf{C}_{k}$, where $\mathbf{C}_{k}$ is defined in $\dkh{\ref{yangchalie}}$. We can obtain $v_{k}=o\dkh{\delta_{k}}$ according to $\mathbf{A3^{'}}$ and $\dkh{\ref{zaosheng1}}$.  According to $\dkh{\ref{yangchazaosheng}}$, the martingale difference part of noise has
	\begin{align*}
	&\varepsilon_{k}\varepsilon_{k}^{T}\\
	=&\dkh{T_{1}\dkh{\mathbf{W}_{k}-\bar{\mathbf{W}}}\otimes \mathbf{I}_{M}}\mathbf{P}_{k}\mathbf{P}_{k}^{T}\dkh{\dkh{\mathbf{W}_{k}-\bar{\mathbf{W}}}^{T}T_{1}^{T}\otimes\mathbf{I}_{M}}\\
	&+\dkh{T_{1}\dkh{\mathbf{W}_{k}-\bar{\mathbf{W}}}\otimes \mathbf{I}_{M}}\mathbf{P}_{k}\dkh{\mathbf{Y}_{k}-\mathbf{P}_{k}}^{T}\\
	&\qquad\cdot\dkh{\dkh{\mathbf{W}_{k}-\mathbf{I}_{N}}^{T}T_{1}^{T}\otimes\mathbf{I}_{M}}\\
	&+\dkh{T_{1}\dkh{\mathbf{W}_{k}-\mathbf{I}_{N}}\otimes\mathbf{I}_{M}}\dkh{\mathbf{Y}_{k}-\mathbf{P}_{k}}\mathbf{P}_{k}^{T}\\
&	\qquad\cdot\dkh{\dkh{\mathbf{W}_{k}-\bar{\mathbf{W}}}^{T}T_{1}^{T}\otimes\mathbf{I}_{M}}\\
	&+\dkh{T_{1}\dkh{\mathbf{W}_{k}-\mathbf{I}_{N}}\otimes\mathbf{I}_{M}}\dkh{\mathbf{Y}_{k}-\mathbf{P}_{k}}\\
	&\qquad\cdot\dkh{\mathbf{Y}_{k}-\mathbf{P}_{k}}^{T}\dkh{\dkh{\mathbf{W}_{k}-\mathbf{I}_{N}}^{T}T_{1}^{T}\otimes\mathbf{I}_{M}}\\
	\triangleq&S_{1}+S_{2}+S_{3}+S_{4}.
	\end{align*}
	Notice that 
	\begin{align*}
	&\dkh{T_{1}\dkh{\mathbf{W}_{k}-\bar{\mathbf{W}}}\otimes\mathbf{I}_{M}}\mathbf{P}_{k}\\
	=&\dkh{T_{1}\dkh{\mathbf{W}_{k}-\bar{\mathbf{W}}}\otimes \mathbf{I}_{M}}\dkh{\mathbf{P}_{k}-\mathbf{Q}_{k}}\\
	&+\dkh{T_{1}\dkh{\mathbf{W}_{k}-\bar{\mathbf{W}}}\otimes \mathbf{I}_{M}}\dkh{\mathbf{Q}_{k}-\mathbf{Q}^{\ast}}\\
	&+\dkh{T_{1}\dkh{\mathbf{W}_{k}-\bar{\mathbf{W}}}\otimes \mathbf{I}_{M}}\mathbf{Q}^{\ast}\\
	=&\dkh{T_{1}\dkh{\mathbf{W}_{k}-\bar{\mathbf{W}}}\otimes \mathbf{I}_{M}}\dkh{\mathbf{P}_{k}-\mathbf{Q}_{k}}\\
	&+\dkh{T_{1}\dkh{\mathbf{W}_{k}-\bar{\mathbf{W}}}\otimes \mathbf{I}_{M}}\dkh{\mathbf{Q}_{k}-\mathbf{Q}^{\ast}}\\
	&+\dkh{T_{1}\dkh{\mathbf{W}_{k}-\bar{\mathbf{W}}}\otimes \mathbf{I}_{M}}\dkh{\mathbf{1}\otimes\dkh{\frac{1}{N}\mathbf{1}^{T}\otimes\mathbf{I}_{M}\mathbf{Q}_{0}}}\\
	=&\dkh{T_{1}\dkh{\mathbf{W}_{k}-\bar{\mathbf{W}}}\otimes \mathbf{I}_{M}}\dkh{\mathbf{P}_{k}-\mathbf{Q}_{k}}\\
	&+\dkh{T_{1}\dkh{\mathbf{W}_{k}-\bar{\mathbf{W}}}\otimes \mathbf{I}_{M}}\dkh{\mathbf{Q}_{k}-\mathbf{Q}^{\ast}}\\
	&+\dkh{T_{1}\dkh{\mathbf{W}_{k}-\bar{\mathbf{W}}}\mathbf{1}}\otimes\dkh{\frac{1}{N}\mathbf{I}_{M}\dkh{\mathbf{1}^{T}\otimes\mathbf{I}_{M}}\mathbf{Q}_{0}}.
	\end{align*}
	Since $\mathbf{W}_{k}$ and $\bar{\mathbf{W}}$ all are double stochastic matrix and $\|\mathbf{P}_{k}-\mathbf{Q}_{k}\|\xrightarrow[k\rightarrow\infty]{}0,~a.s.$,   $\mathbf{Q}_{k}-\mathbf{Q}^{\ast}\xrightarrow[k\rightarrow\infty]{}0,~a.s.$, we have 
	\begin{align}
	\dkh{T_{1}\dkh{\mathbf{W}_{k}-\bar{\mathbf{W}}}\otimes\mathbf{I}_{M}}\mathbf{P}_{k}\xrightarrow[k\rightarrow\infty]{}0,~~a.s..
	\end{align}
	Therefore, we can obtain $\lim\limits_{k\rightarrow\infty}E\fkh{S_{1}|\mathcal{F}_{k-1}}=0,~a.s.$,
	$\lim\limits_{k\rightarrow\infty}E\fkh{S_{2}|\mathcal{F}_{k-1}}=0,~a.s.$ and $\lim\limits_{k\rightarrow\infty}E\fkh{S_{3}|\mathcal{F}_{k-1}}=0,~a.s.$. 
	\par
	On the other hand, due to $\mathbf{A5}$, $\varSigma\triangleq \lim\limits_{k\rightarrow\infty}E\fkh{\dkh{\mathbf{Y}_{k}-\mathbf{P}_{k}}\dkh{\mathbf{Y}_{k}-\mathbf{P}_{k}}^{T}|\mathcal{F}_{k-1}}$  is a constant matrix  almost surely.   Besides,  $\mathbf{W}_{k}$ is independent with $\mathcal{F}_{k-1}$ in $\mathbf{A2^{'}}$, then
	\begin{align*}
&\lim\limits_{k\rightarrow\infty}E\{S_{4}|\mathcal{F}_{k-1}\}\\
=&\lim\limits_{k\rightarrow\infty}E\{\dkh{T_{1}\dkh{\mathbf{W}_{k}-\mathbf{I}_{N}}\otimes\mathbf{I}_{M}}\\
&\dkh{\mathbf{Y}_{k}-\mathbf{P}_{k}}\dkh{\mathbf{Y}_{k}-\mathbf{P}_{k}}^{T}\dkh{\dkh{\mathbf{W}_{k}-\mathbf{I}_{N}}^{T}T_{1}^{T}\otimes\mathbf{I}_{M}}|\mathcal{F}_{k-1}\}\\
=&\lim\limits_{k\rightarrow\infty}E\{E\{\dkh{T_{1}\dkh{\mathbf{W}_{k}-\mathbf{I}_{N}}\otimes\mathbf{I}_{M}}\dkh{\mathbf{Y}_{k}-\mathbf{P}_{k}}\dkh{\mathbf{Y}_{k}-\mathbf{P}_{k}}^{T}\\
&\dkh{\dkh{\mathbf{W}_{k}-\mathbf{I}_{N}}^{T}T_{1}^{T}\otimes\mathbf{I}_{M}}|\mathcal{F}_{k-1},\mathbf{W}_{k}\}|\mathcal{F}_{k-1}\}\\
=&\lim\limits_{k\rightarrow\infty}E\{\dkh{T_{1}\dkh{\mathbf{W}_{k}-\mathbf{I}_{N}}\otimes\mathbf{I}_{M}}\\
&E\{\dkh{\mathbf{Y}_{k}-\mathbf{P}_{k}}\dkh{\mathbf{Y}_{k}-\mathbf{P}_{k}}^{T}|\mathcal{F}_{k-1},\mathbf{W}_{k}\}\\
&\dkh{\dkh{\mathbf{W}_{k}-\mathbf{I}_{N}}^{T}T_{1}^{T}\otimes\mathbf{I}_{M}}|\mathcal{F}_{k-1}\}\\
=&\lim\limits_{k\rightarrow\infty}E\{\dkh{T_{1}\dkh{\mathbf{W}_{k}-\mathbf{I}_{N}}\otimes\mathbf{I}_{M}}|\mathcal{F}_{k-1}\}\\
&E\{E\{\dkh{\mathbf{Y}_{k}-\mathbf{P}_{k}}\dkh{\mathbf{Y}_{k}-\mathbf{P}_{k}}^{T}|\mathcal{F}_{k-1},\mathbf{W}_{k}\}|\mathcal{F}_{k-1}\}\\
&\times E\{\dkh{\mathbf{W}_{k}-\mathbf{I}_{N}}^{T}T_{1}^{T}\otimes\mathbf{I}_{M}|\mathcal{F}_{k-1}\}\\
=&E\{\dkh{T_{1}\dkh{\mathbf{W}_{k}-\mathbf{I}_{N}}\otimes\mathbf{I}_{M}}\}\cdot\varSigma\cdot E\{\dkh{\mathbf{W}_{k}-\mathbf{I}_{N}}^{T}T_{1}^{T}\otimes\mathbf{I}_{M}\}\triangleq\mathbf{S}_{0}.
\end{align*}
	According to  Lemma \ref{yangchana}, we have verified C2. 
	\par
	In summary, we have verified all conditions in Lemma \ref{chen2002},  thus  conclusion of this theorem holds.	
\end{proof}

\begin{tuilun}\label{zhengtaituilun}
	Suppose $\mathbf{A1}$, $\mathbf{A2^{'}}$, $\mathbf{A3^{'}}$, $\mathbf{A4^{'}}$ and $\mathbf{A5}$ hold, then 	
	\begin{equation*}
	\frac{\mathbf{Q}_{k}-\mathbf{Q	}^{\ast}}{\sqrt{\delta_{k}}}\xrightarrow[k\rightarrow\infty]{d} N\dkh{\mathbf{0},\widetilde{{\mathbf{S}}}},
	\end{equation*}
	where $\widetilde{{\mathbf{S}}}\triangleq \dkh{T{\it\Gamma}\otimes \mathbf{I}_{M}}^{T}
	\begin{pmatrix}
	\mathbf{S}&0\\
	0&0\\
	\end{pmatrix}
	\dkh{T{\it\Gamma}\otimes \mathbf{I}_{M}}.$
\end{tuilun}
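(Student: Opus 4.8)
The plan is to transfer the asymptotic normality of the disagreement coordinate $\xi_{k}$, already established in Theorem \ref{zhengtaijieguo}, to the full error vector $\mathbf{Q}_{k}-\mathbf{Q}^{\ast}$ by exploiting the orthogonality of $T\otimes\mathbf{I}_{M}$. First I would observe that $\mathbf{Q}^{\ast}=\mathbf{1}\otimes q^{\ast}$ lies entirely in the consensus subspace, so applying $T\otimes\mathbf{I}_{M}$ splits the error into two blocks,
\begin{equation*}
\dkh{T\otimes\mathbf{I}_{M}}\dkh{\mathbf{Q}_{k}-\mathbf{Q}^{\ast}}=\fkh{\begin{matrix}\xi_{k}\\ \zeta_{k}\end{matrix}},\qquad \zeta_{k}\triangleq\dkh{\tfrac{1}{\sqrt{N}}\mathbf{1}^{T}\otimes\mathbf{I}_{M}}\dkh{\mathbf{Q}_{k}-\mathbf{Q}^{\ast}}.
\end{equation*}
The top block is exactly $\xi_{k}=\dkh{T_{1}\otimes\mathbf{I}_{M}}\mathbf{Q}_{k}$, since $T_{1}\mathbf{1}=\mathbf{0}$ annihilates $\mathbf{Q}^{\ast}$; the bottom block is $\zeta_{k}=\sqrt{N}\dkh{\frac{1}{N}\sum_{j=1}^{N}Q_{j,k}-q^{\ast}}$, the deviation of the network average from its limit.

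Second, I would show that the consensus block carries no fluctuation at the scale $\sqrt{\delta_{k}}$. Under the choice $\mathbf{B}\equiv\mathbf{I}_{N}$ adopted in Lemma \ref{yangchana} and Theorem \ref{zhengtaijieguo}, premultiplying the recursion \eqref{stochastic iteration 2} by $\frac{1}{N}\mathbf{1}^{T}\otimes\mathbf{I}_{M}$ annihilates every term on the right-hand side: the drift drops out because $\mathbf{1}^{T}\dkh{\bar{\mathbf{W}}_{k}-\mathbf{I}_{N}}=\mathbf{0}$, the censoring term $\mathbf{C}_{k}$ drops out for the same reason, and the martingale term $\mathbf{M}_{k}$ drops out because $\mathbf{1}^{T}\dkh{\mathbf{W}_{k}-\mathbf{I}_{N}}=\mathbf{0}$ and $\mathbf{1}^{T}\dkh{\mathbf{W}_{k}-\bar{\mathbf{W}}_{k}}=\mathbf{0}$, all by double stochasticity. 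This is precisely equation \eqref{sum}, which forces the average to be conserved, $\frac{1}{N}\sum_{j}Q_{j,k}\equiv q^{\ast}$, hence $\zeta_{k}\equiv\mathbf{0}$.

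Third, I would invert the orthogonal map. Because $T\otimes\mathbf{I}_{M}$ is orthogonal, $\mathbf{Q}_{k}-\mathbf{Q}^{\ast}=\dkh{T_{1}^{T}\otimes\mathbf{I}_{M}}\xi_{k}+\dkh{\frac{1}{\sqrt{N}}\mathbf{1}\otimes\mathbf{I}_{M}}\zeta_{k}=\dkh{T_{1}^{T}\otimes\mathbf{I}_{M}}\xi_{k}$, so that
\begin{equation*}
\frac{\mathbf{Q}_{k}-\mathbf{Q}^{\ast}}{\sqrt{\delta_{k}}}=\dkh{T_{1}^{T}\otimes\mathbf{I}_{M}}\frac{\xi_{k}}{\sqrt{\delta_{k}}}.
\end{equation*}
Since $T_{1}^{T}\otimes\mathbf{I}_{M}$ is a fixed matrix, applying the continuous mapping theorem to the weak limit of Theorem \ref{zhengtaijieguo} yields $\frac{\mathbf{Q}_{k}-\mathbf{Q}^{\ast}}{\sqrt{\delta_{k}}}\xrightarrow[k\rightarrow\infty]{d}N\dkh{\mathbf{0},\dkh{T_{1}^{T}\otimes\mathbf{I}_{M}}\mathbf{S}\dkh{T_{1}\otimes\mathbf{I}_{M}}}$. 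A short Kronecker computation, using $T{\it\Gamma}=\fkh{\begin{matrix}T_{1}\\\mathbf{0}\end{matrix}}$ so that $T{\it\Gamma}\otimes\mathbf{I}_{M}$ has $T_{1}\otimes\mathbf{I}_{M}$ as its only nonzero block, identifies this covariance with $\widetilde{\mathbf{S}}$, completing the argument.

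The only nontrivial step is the vanishing of $\zeta_{k}$; everything else is orthogonal-basis bookkeeping and one invocation of the continuous mapping theorem. With $\mathbf{B}\equiv\mathbf{I}_{N}$ the vanishing is \emph{exact}, via conservation of the average as above, so not even Slutsky's theorem is needed. Had one wanted a general constant $\mathbf{B}$, the genuine obstacle would be to control the tail sum $\frac{1}{N}\sum_{j}Q_{j,k}-q^{\ast}=-\sum_{t\geq k}\delta_{t}\dkh{\frac{1}{N}\mathbf{1}^{T}\otimes\mathbf{I}_{M}}\dkh{\mathbf{C}_{t}+\mathbf{M}_{t}}$ and prove it is $o\dkh{\sqrt{\delta_{k}}}$, using $\mathbf{A3^{'}}$ (giving $\|\mathbf{C}_{t}\|=o\dkh{\delta_{t}}$) for the censoring part and a martingale tail estimate for the noise part, in which case one would then close the argument through Slutsky's theorem rather than an exact identity.
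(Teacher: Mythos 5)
Your proposal is correct and is essentially the argument the paper intends (the corollary is stated without proof, but the form of $\widetilde{\mathbf{S}}$ with the $T{\it\Gamma}\otimes \mathbf{I}_{M}$ factors presupposes exactly your orthogonal decomposition into the disagreement block $\xi_{k}$ and the conserved average block). Your observations that $\mathbf{B}\equiv\mathbf{I}_{N}$ makes the average exactly conserved — so that $\zeta_{k}\equiv 0$ and no Slutsky step is needed — and that a general constant $\mathbf{B}$ would require an $o(\sqrt{\delta_{k}})$ tail estimate are both accurate and, if anything, more careful than the paper's treatment.
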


\begin{zhushi}
	Theorem \ref{zhengtaijieguo} and Corollary \ref{zhengtaituilun} establish that the error between estimates generated by algorithm $\dkh{\ref{stochastic iteration}}$ and true empirical distribution is asymptotically normal, and the asymptotic covariance is characterized by network topology and quantized protocol. Our analysis results are more detailed and profound than that in \cite{6827923}, which only gives bounds on the expected squared error.	
\end{zhushi}

\section{Numerical Simulation}\label{diwujie}
In this section we provide a numerical simulation for the distribution social sampling algorithm considered in $\dkh{\ref{linear update}}$. 
Let $N=50$ with the underlying graph being fully connected. Each agent holds an initial opinion $Q_{i,0}=\mathbf{e}_{X_{i}}$, which is drawn $i.i.d.$ from $\fkh{0.2~0.3~0.4~0.1}$. It means that the dimension of opinion state space  $M=4$. At each time $k$, agent $i$ generates its random message $Y_{i,k}\in\hkh{\mathbf{e}_{1},\mathbf{e}_{2},\mathbf{e}_{3},\mathbf{e}_{4}}$ based on the internal estimate $Q_{i,k}$ directly, $i.e.$, we choose $P_{i,k}=Q_{i,k}$ and do not make corrections. Setting  the mixed coefficients $a_{ii}^{k}=0$, $b_{ii}^{k}=1$, step size   $\delta_{k}=\frac{1}{k^{0.75}}$, we update the internal estimate sequence $\hkh{Q_{i,k}}$ according to iteration $\dkh{\ref{linear update}}$.
\begin{center}
	{\includegraphics[scale=0.5]{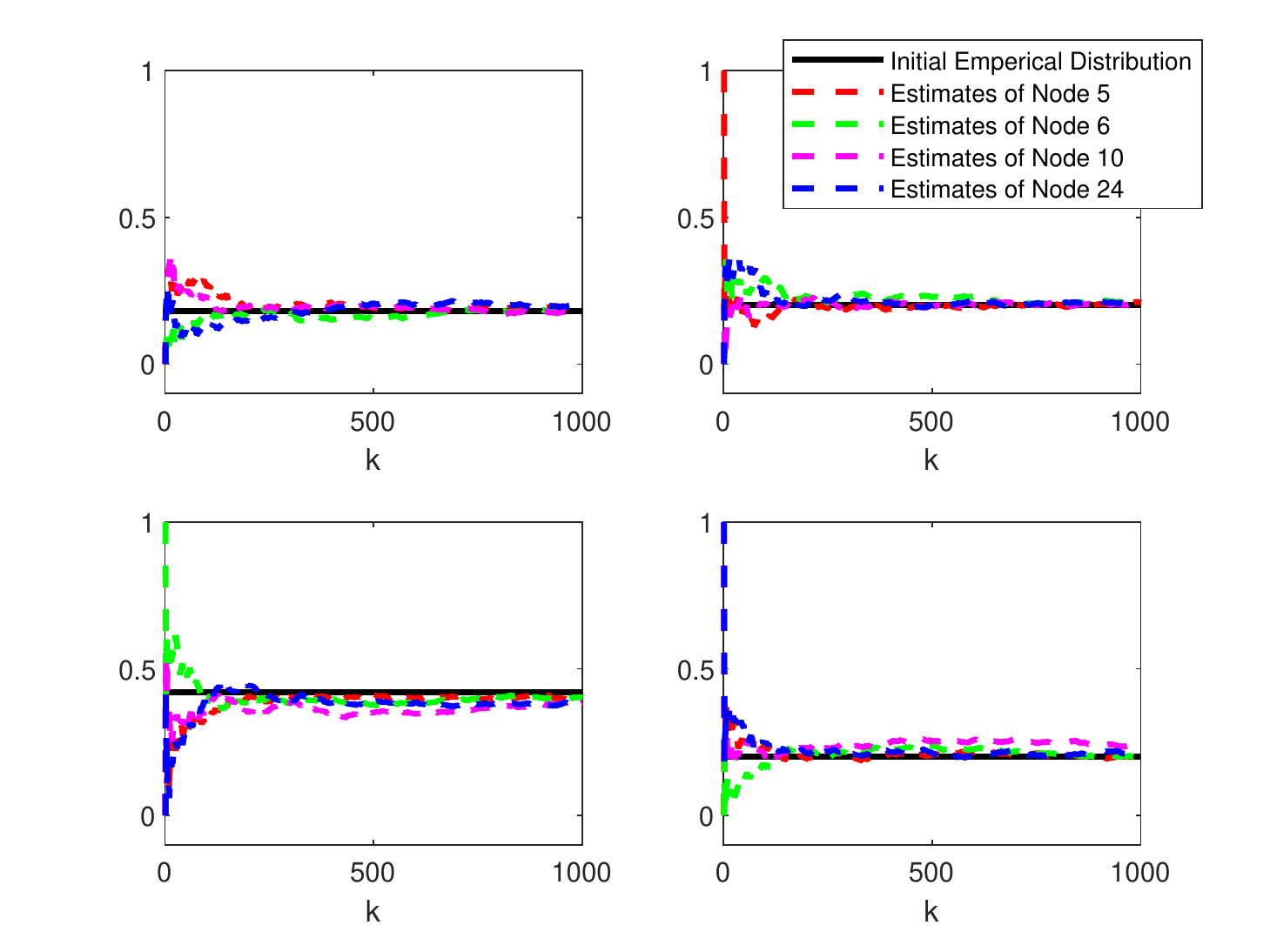}}\vskip3mm
	\centering{\small {\bf Figure 1:} Trace of estimate of $Q_{i,k}$ for $i\in \fkh{N}$ over every single opinion state  $m\in\fkh{M}$ with $M=4$ \label{fig1}}
\end{center}
\begin{center}
	{\includegraphics[scale=0.5]{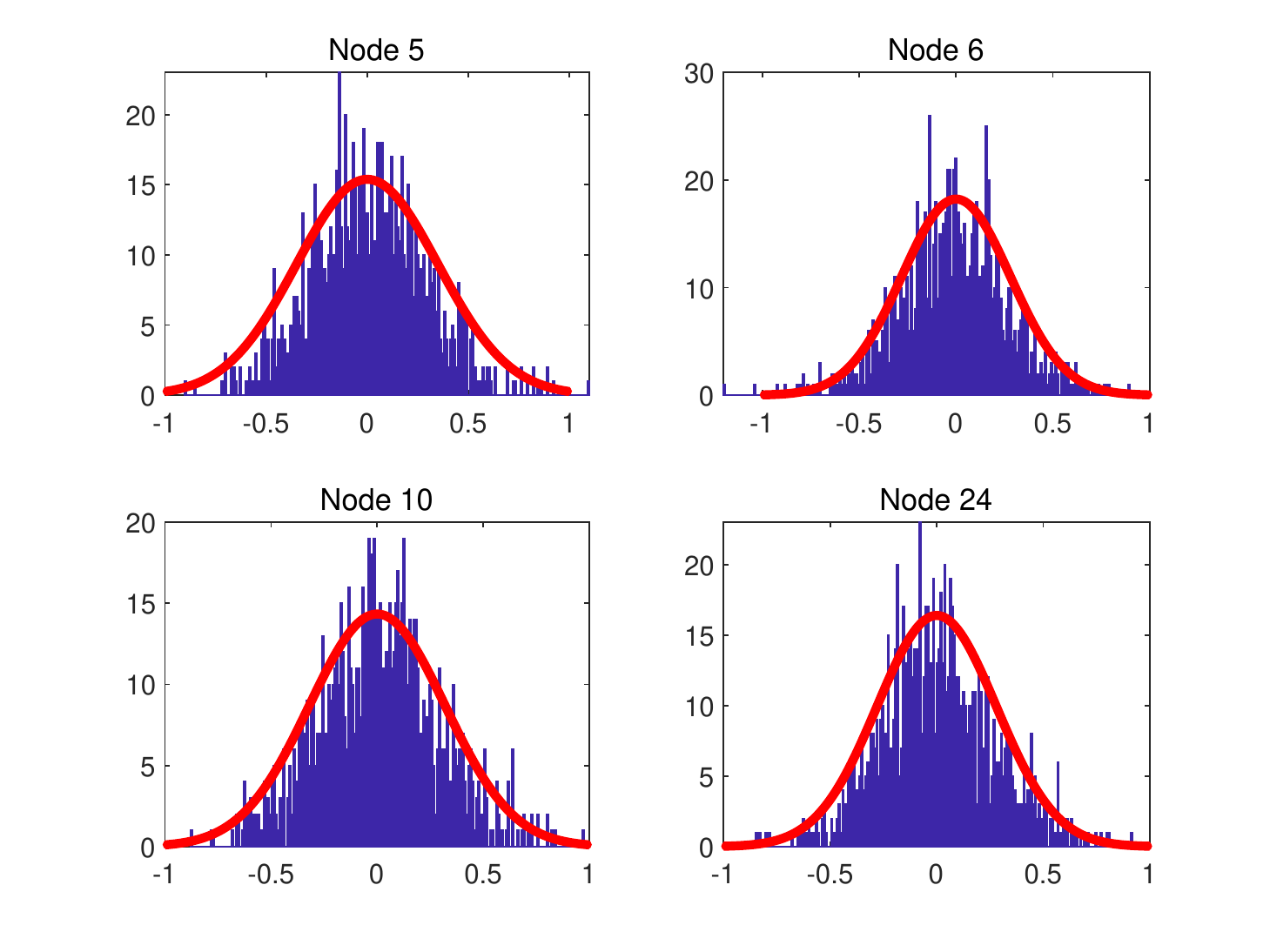}}\vskip3mm
	\centering{\small {\bf Figure 2:}  Histogram and limit distribution for $\dkh{Q_{i,k}-q^{\ast}}/\sqrt{\delta_{k}}$ at $k=500$.}\label{fig2}
\end{center}
\par
The trace of estimate sequence of empirical distribution $\Pi$ for some selected agents is shown in Figure 1, where each subgraph presents a state in $\fkh{M}=4$. As shown in Theorem \ref{theorem}, the estimated sequence generated by social sampling procedure converges to the true empirical distribution $q^{\ast}\triangleq\dkh{\frac{1}{N}\mathbf{1}^{T}\otimes \mathbf{I}_{M}}\mathbf{Q}_{0}$. We  have calculated the algorithm $\dkh{\ref{linear update}}$ for 1000 times independently. The histograms for each component of $\dkh{Q_{i,k}-q^{\ast}}/\sqrt{\delta_{k}}$ at $k=500$ are shown in Figure 2. It is shown that the data fits the normal distribution well.

\section{Conclusions}\label{diliujie}
In this work, convergence of distributed social sampling algorithm toward a common distribution has been established over random networks based on stochastic approximation.  We have proved that the distribution estimates derived by agents' local interaction reached consensus almost sure to a value, which is related with the true empirical distribution and accumulation of quantized error. Furthermore, the error between estimates and true empirical distribution has been shown to be asymptotically normal with zero mean and known covariance, which is characterized by network topology and the social sampling protocol.\\
\indent In fact, the randomized sample procedure is fairly general to be used in other problems, such as distributed optimization over large data sets. As the messages are quantized as identical vectors,  the computation complexity is significantly reduced. In the future work, we will dig deeper about this random message passing protocol.





\begin{appendix}
	\subsection*{Appendix A \qquad  Proof of Lemma \ref{proof}}\label{app_proof}
	According to the protocol of social sampling,  we  have $	E\fkh{Y_{i,k}}=\sum_{m=1}^{M} \mathbb{P}\dkh{Y_{i,k}=\mathbf{e}_{m}}
	=\sum_{m=1}^{M}\mathbf{e}_{m}P_{i,k}^{m}
	=P_{i,k}.$
	Thus, $E\fkh{\mathbf{Y}_{k}}=\mathbf{P}_{k}$.
	Taking conditional expectation given $\mathcal{F}_{k-1}$ over both sides of $\dkh{\ref{yangchalie}}$,  
	we obtain
	\begin{align*}
	E\fkh{\mathbf{M}_{k}|\mathcal{F}_{k-1}}=&E\fkh{\dkh{\dkh{\mathbf{W}_{k}-\mathbf{B}_{k}}\otimes\mathbf{I}_{M}}\dkh{\mathbf{Y}_{k}-\mathbf{P}_{k}}|\mathcal{F}_{k-1}}\\
	&+E\fkh{\dkh{\dkh{\mathbf{W}_{k}-\bar{\mathbf{W}}_{k}}\otimes\mathbf{I}_{M}}\mathbf{P}_{k}|\mathcal{F}_{k-1}}\\
	=& E\fkh{\dkh{\mathbf{W}_{k}-\mathbf{B}_{k}}\otimes\mathbf{I}_{M}} E\fkh{\dkh{\mathbf{Y}_{k}-\mathbf{P}_{k}}|\mathcal{F}_{k-1}}\\
	&+E\fkh{\dkh{\mathbf{W}_{k}-\bar{\mathbf{W}}_{k}}\otimes\mathbf{I}_{M}}\mathbf{P}_{k}=0.
	\end{align*}
	Hence, $\hkh{\mathbf{M}_{k},\mathcal{F}_{k}}$ is a martingale difference sequence. 
	
	\subsection*{Appendix B \qquad Proof of Lemma \ref{yangchana}}\label{app_yangchana}
	Since  we have verified that $\dkh{\mathbf{M}_{k},\mathcal{F}_{k}}$ is a martingale difference sequence and $T_{1}$ is a constant matrix, $\dkh{\varepsilon_{k},\mathcal{F}_{k}}$ given by $\dkh{\ref{yangchazaosheng}}$ is also a martingale difference sequence. At first,  we demonstrate the boundedness of  $\varepsilon_{k}$.  The random message $Y_{i,k}\in \mathcal{Y}=\hkh{\mathbf{0},\mathbf{e}_{1},\cdots, \mathbf{e}_{M}}$ of agent $i$ at time $k$ is  generated according to the $M$-dimension row probability vector $P_{i,k}\in \mathbb{P}\dkh{\mathcal{Y}}$. Besides, $\mathbf{W}_{k}$ is a double stochastic matrix by $\mathbf{A2^{'}}$,  then  we obtain 
	\begin{align*}
	\|\varepsilon_{k}\|\leq&\|\dkh{T_{1}\otimes \mathbf{I}_{M}}\dkh{\dkh{\mathbf{W}_{k}-\mathbf{I}_{N}}\otimes\mathbf{I}_{M}}\dkh{\mathbf{Y}_{k}-\mathbf{P}_{k}}\|\\
	&+\|\dkh{T_{1}\otimes\mathbf{I}_{M}}\dkh{\dkh{\mathbf{W}_{k}-\bar{\mathbf{W}}}\otimes\mathbf{I}_{M}}\mathbf{P}_{k}\|\\
	\leq& \|T_{1}\otimes\mathbf{I}_{M}\|\cdot\|\dkh{\mathbf{W}_{k}-\mathbf{I}_{N}}\otimes\mathbf{I}_{M}\|\cdot\|\mathbf{Y}_{k}-\mathbf{P}_{k}\|\\
	&+\|T_{1}\otimes\mathbf{I}_{M}\|\cdot\|\dkh{\mathbf{W}_{k}-\bar{\mathbf{W}}}\otimes\mathbf{I}_{M}\|\cdot\|\mathbf{P}_{k}\|
	\\
	&\leq cN^{2}M^{2}.
	\end{align*}	
	Hence, the noise sequence $\varepsilon_{k}$ is bounded. We can derive $\dkh{\ref{yangchayoujie}}$ and $\dkh{\ref{jixian}}$ directly.

\end{appendix}
	
\end{document}